\documentclass[aps,prl,twocolumn,superscriptaddress,10pt,article,showpacs,longbibliography]{revtex4-2}

\usepackage{blindtext}
\usepackage{lipsum}
\usepackage{graphics}
\usepackage{amsmath}
\usepackage{graphicx}
\usepackage{amssymb}
\usepackage{pifont}
\usepackage{verbatim}
\usepackage{physics}
\usepackage[normalem]{ulem}
\usepackage[dvipsnames]{xcolor}
\usepackage{bbm}
\usepackage{enumitem}
\usepackage{multirow}

\usepackage{bm}
\usepackage{physics}
\usepackage{amsfonts}
\usepackage{amsthm}
\usepackage{bbm}
\usepackage{mathtools}
\usepackage{braket}
\usepackage[normalem]{ulem}
\usepackage{wrapfig}
\usepackage{tikz}
\usepackage{dsfont}
\usepackage{comment}
\usepackage{thmtools,thm-restate}
\usepackage[export]{adjustbox}

\makeatletter
\AtBeginDocument{%
    \newwrite\bibnotes
    \def\bibnotesext{Notes.bib}
    \immediate\openout\bibnotes=\jobname\bibnotesext
    \immediate\write\bibnotes{@CONTROL{REVTEX41Control}}
    \immediate\write\bibnotes{@CONTROL{%
    apsrev42Control,author="08",editor="1",pages="1",title="0",year="1"}}
     \if@filesw
     \immediate\write\@auxout{\string\citation{apsrev41Control}}%
    \fi
}%
\makeatother

\def\C{\mathbb C}
\def\Z{\mathbb Z}

\def\rank{\mathsf{rank}}
\def\poly{\mathsf{poly}}

\usepackage{listings}
\usepackage{xcolor}
\usepackage{colortbl}
\usepackage{diagbox}
\usepackage{qcircuit}

\usepackage{algpseudocode}

\newcounter{protocol}
\renewcommand{\theprotocol}{\arabic{protocol}}
{%
  \refstepcounter{protocol}%
  \par\medskip\noindent
  \textbf{Protocol~\theprotocol%
  \ifx\relax#1\relax\else\ (\textit{#1})\fi.}%
  \par\smallskip\noindent\rule{\columnwidth}{0.4pt}\par\smallskip
}%
{%
  \smallskip\noindent\rule{\columnwidth}{0.4pt}\par\medskip
}

\newtheorem{theorem}{Theorem}
\newtheorem{definition}{Definition}

\newtheorem{lemma}{Lemma}
\newtheorem{corollary}{Corollary}
\newtheorem{proposition}{Proposition}

\newtheorem{conjecture}{Conjecture}

\newcommand{\sgn}{\operatorname{sgn}}
\newcommand{\id}{\mathbbm 1}
\newcommand{\M}{\mathrm{M}}

\definecolor{myrefcolor}{rgb}{0.067,0.5,0.5}
\usepackage[
    breaklinks,
    pdftex,
    colorlinks=true,
    linkcolor=myrefcolor,
    citecolor=myrefcolor,
    urlcolor=myrefcolor
]{hyperref}

\usepackage{times}
 
\begin{document}

\title{Measurement-induced non-commutativity in adaptive fermionic linear optics}

\author{Chenfeng Cao}
\affiliation{Dahlem Center for Complex Quantum Systems, Freie Universit\"{a}t Berlin, 14195 Berlin, Germany}
\affiliation{HK Institute of Quantum Science $\&$ Technology, The University of Hong Kong, Hong Kong, China}

\author{Yifan Tang}
\affiliation{Dahlem Center for Complex Quantum Systems, Freie Universit\"{a}t Berlin, 14195 Berlin, Germany}

\author{Jens Eisert}
\email{jense@zedat.fu-berlin.de}
\affiliation{Dahlem Center for Complex Quantum Systems, Freie Universit\"{a}t Berlin, 14195 Berlin, Germany}
\affiliation{Helmholtz-Zentrum Berlin f{\"u}r Materialien und Energie, 14109 Berlin, Germany}

\date{\today}

\begin{abstract}
Fermionic linear optics (FLO) with Gaussian resources is efficiently classically simulable. We show that this is no longer the case for such quantum circuits for fermions with internal degrees of freedom, equipped with  mid-circuit number monitoring and classical feedforward. In our architecture, the measurement record routes the selected blocks into a fixed-order Bell-fusion pairing geometry. On the level of classical description, this implies realizing a situation in which the permutation sum no longer collapses to a single determinant or Pfaffian. Each post-selected branch expands as a signed sum of path-ordered products of typically non-commuting dressed blocks, and branch amplitudes are matrix elements of the resulting non-commutative trace polynomials. Numerically, we observe Porter-Thomas statistics as the output distribution and a rapid growth of the minimal order-respecting matrix product operator bond dimension. These results thus establish mid-circuit measurement-induced non-commutativity as a route to sampling hardness for noninteracting fermions under reasonable complexity assumptions, without introducing coherent two-body interactions into the FLO evolution.
\end{abstract}

\maketitle

Quantum random sampling protocols
aimed at showing a quantum advantage target restricted dynamics whose output distributions are believed to be hard to sample classically, even without full fault tolerance~\cite{Harrow2017Quantum,Hangleiter2023Computational,Eisert2025Mind}. They are paradigmatic prescriptions that show that quantum devices have the ability to computationally outperform classical computers. These proposals have motivated experiments across platforms, from superconducting random-circuit sampling to large-scale photonic sampling~\cite{Arute2019Quantum,Zhong2020Quantum,Madsen2022Quantum}. While a complexity-theoretic underpinning has been fleshed out for such proposals, for practical implementation, the
hardness evidence is indirect, and asymptotic arguments must be weighed against finite-depth noise and steadily improving classical simulation and spoofing methods~\cite{Boixo2018Characterizing,Bouland2019On,Tindall2024Efficient,Gao2024Limitations,Schuster2025A}.

A central feature of these proposals is that output amplitudes encode algebraic quantities that are $\#\mathrm{P}$-hard to compute. 
In boson sampling, amplitudes are matrix permanents~\cite{Valiant1979The,Aaronson2011The,Brod2019Photonics}, 
while Gaussian variants access Hafnians and related matrix functions~\cite{Lund2014Boson,Hamilton2017Gaussian}. 
In commuting (IQP) circuit models, amplitudes map to Ising partition functions and related invariants that are hard to compute and conjecturally hard on average~\cite{Bremner2011Classical,Bremner2016Average-Case,Bremner2017Achieving,NewSupremacy}. Using Stockmeyer's approximate-counting machinery~\cite{Hangleiter2023Computational}, one can uplift the hardness of computing output probabilities to an actual hardness of sampling up to a constant error in the total variation distance. Very recently, mid-circuit measurements with adaptive feedforward have also emerged as a useful architectural resource in this context~\cite{Paletta2024Robust,Cao2026Measurement}; in bosonic Gaussian circuits, recent work identifies the number of adaptive measurement-and-feedforward steps as a natural complexity parameter for mean-value estimation~\cite{Oh2026Classical}. More generally, random-circuit sampling motivates related hardness conjectures and validation protocols~\cite{Boixo2018Characterizing,Bouland2019On}.

By sharp contrast, \emph{fermionic linear optics} (FLO) with Gaussian resources is efficiently classically simulable, with many-body amplitudes reducing to determinants and Pfaffians, and matchgate circuits admitting polynomial-time simulation~\cite{Knill2001Fermionic,Terhal2002Classical,Valiant2002Quantum,Bravyi2005Lagrangian,Jozsa2008Matchgates,ReardonSmith2024Improved}. Consequently, proposals for fermionic quantum advantages, 
often referred to as ``fermion sampling'', rely on injecting 
non-Gaussian resources such as interaction gadgets or 
magic-state inputs~\cite{Hebenstreit2019All, Hebenstreit2020Computational, Oszmaniec2022Fermion}. Adaptive non-Gaussian charge/parity measurements are also known to restore universal quantum computation for fermionic linear optics via gate-synthesis constructions~\cite{Beenakker2004Charge,DiVincenzo2005Fermionic}; however, these results do not characterize the algebraic structure of monitored branch amplitudes, nor do they address the sampling-hardness question considered here.

In this work, we consider a monitored free-fermion architecture in which fermions carrying an internal $d$-level label first propagate through a number-conserving FLO circuit and are then subjected to a coarse-grained blockwise occupation measurement. This monitoring asks only which spatial blocks are singly occupied, without resolving the internal orbital, and the resulting collision-free record is used to route the selected blocks into a fixed Bell-fusion readout geometry. Because the measurement outcomes specify
which block enters which fusion step, feedforward turns the usual antisymmetrized FLO interference into an order-sensitive contraction problem. This coarse-grained projection onto the single-occupation sector is non-Gaussian and lies outside matchgate/FLO simulation frameworks based on local Gaussian, mode-resolved measurements~\cite{Terhal2002Classical,Jozsa2008Matchgates,Brod2016Efficient}.

At the level of a single permutation term, composing the permutation wiring with the fixed fusion geometry yields 
a \emph{path--cycle decomposition}. Each permutation term evaluates to an ordered product of byproduct-dressed blocks along an induced boundary path, together with scalar trace factors from closed loops; summing over permutations therefore no longer yields a single determinant or Pfaffian, but a matrix-valued non-commutative trace polynomial.

We formalize this picture through an explicit monitored-FLO protocol on $m$ spatial blocks with $n$ input fermions (Fig.~\ref{fig:architecture}). Conditioned on a collision-free monitoring record, which occurs with constant probability on average over Haar-random instances in the dilute regime $m=\Theta(n^2)$, the encoded FLO stage induces on the selected blocks a determinantal operator kernel $\det_{\otimes}(\mathbf S)$ associated with an $n\times n$ block sub-matrix $\mathbf S$ of the single-particle propagator. The subsequent Bell-fusion outcomes pin teleportation byproducts to fixed fusion steps~\cite{Bennett1993Teleporting,Gottesman1999Demonstrating,Stephen2024Preparing,Smith2023Deterministic,Bartolucci2023Fusion}, so that each branch $(\boldsymbol c,{\boldsymbol\beta})$ is governed by an outcome-dependent operator $\mathcal{T}_{\boldsymbol\beta}(\mathbf S)$ acting on the boundary wire. Its matrix elements yield the corresponding branch amplitudes, and its algebraic form is that of a matrix-valued non-commutative polynomial in the dressed blocks $\{\widetilde S_{t,k}\}_{t,k=1}^n$. Measurement and feedforward thus map free-fermion interference onto non-commutative matrix multiplication.

To quantify classical simulability, we consider contractions that respect the enforced contraction order. Each branch operator admits an exact \emph{matrix product operator} (MPO) representation in this order: at step $t$ the update is linear in the row-$t$ dressed blocks, while the bond index records the open auxiliary system wiring across the cut. The minimal bond dimension therefore lower-bounds the sequential memory required by any single-pass contraction that follows the enforced order. Numerically, typical branches exhibit rapid growth of this minimal bond dimension together with Porter-Thomas statistics for the conditional branch weights. Motivated by Haar-induced random-FLO ensembles, we formulate an average-case hardness conjecture which, under standard complexity assumptions, would imply $\#\mathrm{P}$-hardness of exact branch amplitudes and classical intractability of sampling, in direct analogy with conjectures underlying other sampling-advantage proposals~\cite{Aaronson2011The,Bremner2016Average-Case}.

\begin{figure}[t]
\centering
\includegraphics[width=0.95\columnwidth]{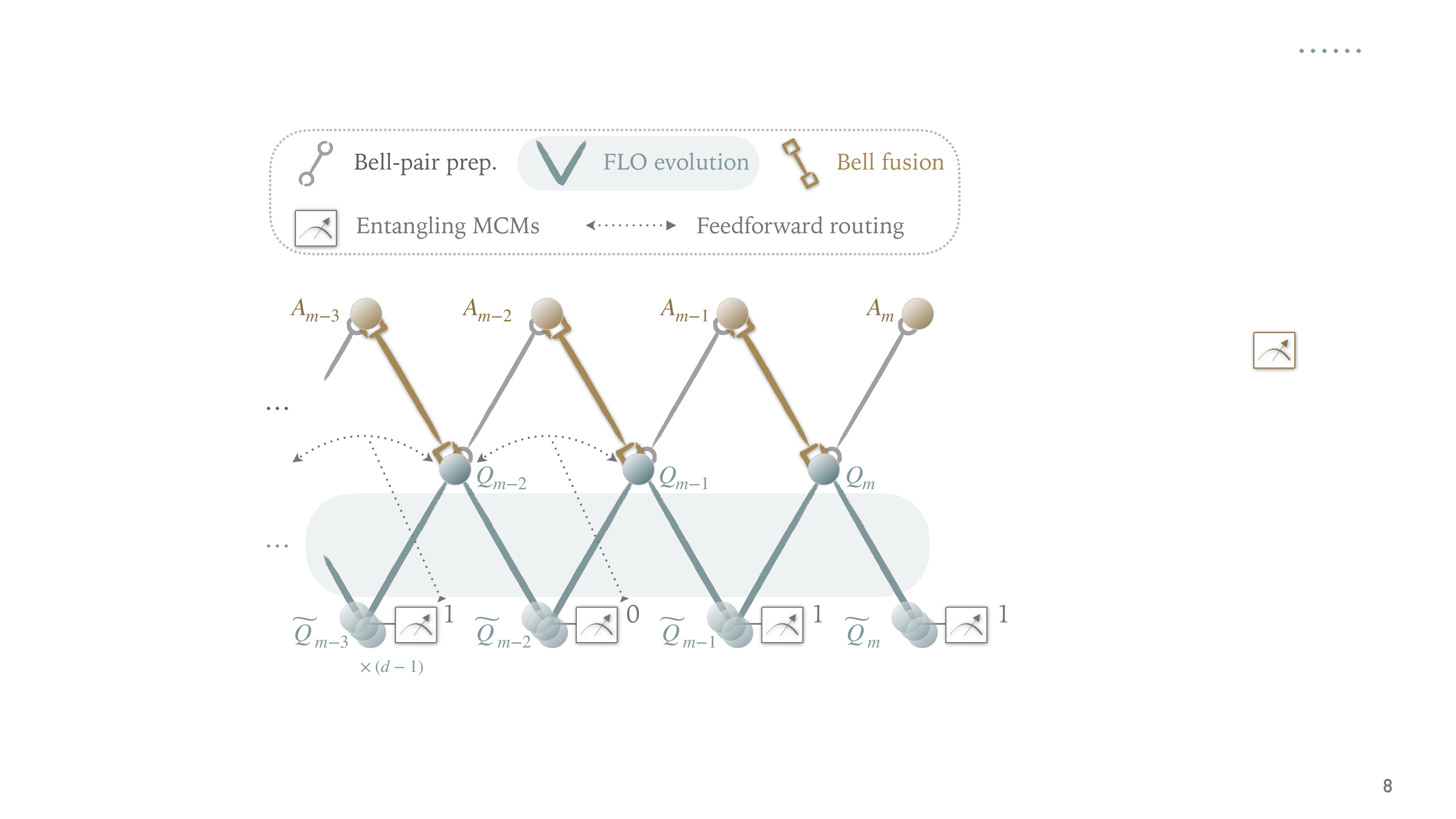}
\caption{
Monitored adaptive FLO architecture.
An auxiliary qudit chain $(A_0,\dots,A_m)$ is coupled to logical qudits $(Q_1,\dots,Q_m)$ and local auxiliary registers $(\widetilde Q_1,\dots,\widetilde Q_m)$, with each pair $(Q_j,\widetilde Q_j)$ encoding the vacuum-plus-single-particle sector of $d$ fermionic orbitals per block.
Occupied input blocks $\mathcal I_{\rm in}=\{m{-}n{+}1,\dots,m\}$ are Bell-entangled with the auxiliary chain.
A number-conserving FLO circuit (blue) with propagator $\mathbf V$ acts on the encoded orbitals.
Mid-circuit monitoring (MCM) produces $\boldsymbol c$; collision-free outcomes select $n$ blocks, fix the $n\times n$ block sub-matrix $\mathbf S$, and route them via feedforward into a fixed nearest-neighbor Bell-fusion geometry with outcomes $\boldsymbol\beta$, followed by a boundary readout on the terminal auxiliary. For fixed boundary state vectors $|\ell\rangle$ and $|r\rangle$, the protocol defines the conditional branch distribution $p(\boldsymbol\beta\,|\,\boldsymbol c)$.
}
\label{fig:architecture}
\end{figure}
   
\smallskip
\paragraph*{Architecture and protocol.}
Each spatial block of the quantum architecture contains $d$ fermionic orbitals; the occupied orbital carries an internal $d$-level label. To implement the monitoring and fusion readout, we consider three coupled registers per block $j\in\{1,\dots,m\}$ (Fig.~\ref{fig:architecture}): a qudit $Q_j\simeq\C^d$, an auxiliary register $\widetilde Q_j$ of $(d-1)$ qubits, and an auxiliary qudit $A_j\simeq\C^d$ (plus a boundary auxiliary $A_0$). The pair $(Q_j,\widetilde Q_j)$ encodes the local vacuum and single-particle subspace of $d$ fermionic orbitals $\{a^\dagger_{j,\alpha}\}_{\alpha=0}^{d-1}$, with $a^\dagger_{j,\alpha}|{\rm vac}\rangle_j$ carrying the internal label $\alpha$; configurations with local multi-occupancy are discarded by post-selection. A number-conserving FLO circuit acts on the encoded $dm$ orbitals and induces a single-particle propagator $\mathbf V\in \mathrm U(dm)$, viewed as an $m\times m$ block matrix $\mathbf V=(V_{j,k})$ with $V_{j,k}\in\M_d(\C)$.

On each block $j$ we fix local maps on $(Q_j,\widetilde Q_j)$ implementing an encoding $\mathcal E_j$ and decoding $\mathcal D_j$; for fixed $d$ they incur only constant local overhead, and an explicit qubit construction for $d=2$ is given in the End Matter. The encoding identifies the internal label $\alpha$ with the occupied orbital,
\begin{equation}
\label{eq:local-encode}
\mathcal E_j:\ |\alpha\rangle_{Q_j}|0^{d-1}\rangle_{\widetilde Q_j}\mapsto
a^\dagger_{j,\alpha}|{\rm vac}\rangle_j,\quad \alpha=0,\dots,d-1.
\end{equation}
The decoding $\mathcal D_j$ inverts this map while writing an occupation flag $c_j\in\{0,1\}$ onto a designated qubit $f_j\subset\widetilde Q_j$, so that measuring $f_j$ reads out only local occupation/collisions and does not resolve $\alpha$, while preserving the coherence of the internal state on $Q_j$ up to a fixed known local basis unitary.

The protocol runs on $m$ spatial blocks with $n$ input fermions; as in
Fig.~\ref{fig:architecture} we occupy the last $n$ sites
$\mathcal I_{\rm in}=\{m-n+1,\dots,m\}$.
\begin{enumerate}[label=(\roman*),leftmargin=*,itemsep=1pt,topsep=2pt]
\item \emph{Bell initialization.}
For each $j\in\mathcal I_{\rm in}$ prepare a generalized Bell pair
$|\Phi_d^+\rangle=d^{-1/2}\sum_{\alpha=0}^{d-1}|\alpha,\alpha\rangle$ on $(Q_j,A_j)$.
Initialize the boundary auxiliary $A_0$ in $|\ell\rangle$ and all unoccupied blocks in
$|{\rm vac}\rangle$.

\item \emph{Encoding and FLO evolution.}
Apply $\mathcal E_j$ on $j\in\mathcal I_{\rm in}$ and run the number-conserving FLO circuit with propagator $\mathbf V$.
During the FLO evolution (between $\mathcal E_j$ and $\mathcal D_j$), the dynamics acts on
occupation degrees of freedom of the encoded orbitals (each mode is $0/1$),
while the $d$-level internal label is accessed only through the local encode/decode and Bell-fusion steps.

\item \emph{Mid-circuit monitoring (MCM) and post-selection.}
After decoding, measure the designated flag qubit $f_j\subset\widetilde Q_j$ and define $c_j=1$ iff block $j$ is in the decoded single-particle sector. This readout asks only whether the block is singly occupied and does not resolve the internal label. Post-select on \emph{collision-free} records with exactly $n$ effective blocks $\ell_1<\cdots<\ell_n$.
This fixes the post-selected block sub-matrix $\mathbf S=(S_{t,k})_{t,k=1}^n$ with $S_{t,k}:=V_{\ell_t,m-n+k}$.

\item \emph{Feedforward routing and Bell fusion.}
Use $\boldsymbol c$ to classically determine a routing that maps the selected blocks into a fixed Bell-fusion pairing geometry, and perform a fixed nearest-neighbor generalized Bell-fusion readout with outcomes $\boldsymbol\beta$. After routing, we relabel the participating auxiliary systems and selected blocks so that the Bell-fusion pairings are indexed as $(A_{t-1},Q_t)$ for $t=1,\dots,n$, with boundary systems $A_0$ and $A_n$.

\item \emph{Boundary projection.}
Measure the boundary auxiliary $A_n$ in a fixed basis with outcome $r\in\{0,\dots,d-1\}$.
Throughout, we fix the boundary state vectors $|\ell\rangle$ and $|r\rangle$ (taking $|\ell\rangle=|r\rangle=|0\rangle$ in all numerics) and leave them implicit; $p(\boldsymbol\beta\,|\,\boldsymbol c)$ denotes the conditional distribution for these fixed boundary states.

\end{enumerate}

\medskip
\noindent
The number-conserving FLO layer can be compiled into standard experimentally accessible primitives~\cite{Oszmaniec2022Fermion}. In the dilute regime $m=\Theta(n^2)$ the collision-free post-selection probability is bounded away from zero.
\begin{lemma}[Constant-rate collision-free monitoring]
\label{lem:post-selection-rate}
Fix $d\ge2$ and $\kappa>0$, and let $p_{\rm e}(n,m,d)$ denote the probability, jointly
over $\mathbf V$ and the Born rule in step~(iii), that monitoring yields a collision-free
record with exactly $n$ effective blocks.
If $m=\kappa n^2$, then there exists $c_{\kappa,d}>0$ independent of $n$ such that
$p_{\rm e}(n,\kappa n^2,d)\ge c_{\kappa,d}$.
\end{lemma}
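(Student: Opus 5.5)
We average over Haar-random $\mathbf V\in\mathrm U(dm)$ and, for any fixed input, reduce the collision-free probability to a statement about the first moments of the output occupation statistics. Because the monitoring record only registers which blocks are singly occupied, the event ``collision-free with exactly $n$ effective blocks'' is the complement of the event that some block receives two or more of the $n$ fermions. (Number conservation fixes the total particle number at $n$, so ``no block doubly occupied'' is equivalent to ``exactly $n$ singly occupied blocks''.) We therefore show that the expected number of colliding pairs is $O(1/\kappa)$ and then convert this into a constant lower bound.

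\textbf{Step 1: reduce to uniform fermionic $n$-particle states.} The input is a fixed $n$-fermion state in which each occupied block $j\in\mathcal I_{\rm in}$ carries one fermion. Its internal labels are entangled with the auxiliaries $A_j$, but tracing out the $A_j$ leaves a mixture over label configurations $(\alpha_{m-n+1},\dots,\alpha_m)$, each of which is a Slater determinant on $n$ distinct orbitals. Since the Haar measure on $\mathrm U(dm)$ is invariant, for every such configuration the output state $U_{\mathbf V}|\psi_{\rm in}\rangle$ is distributed as a Haar-random Slater determinant. Its averaged density matrix is therefore the maximally mixed state on the antisymmetric subspace $\wedge^n\C^{dm}$. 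The averaged probability $p_{\rm e}$ is thus the fraction of occupation configurations (subsets of $n$ orbitals among the $dm$) with at most one orbital per block, all subsets being equally likely.

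\textbf{Step 2: count.} Under the uniform distribution on $n$-subsets of $dm$ orbitals,
\begin{equation}
p_{\rm e}=\frac{\binom{m}{n}d^n}{\binom{dm}{n}}=\prod_{i=0}^{n-1}\frac{d(m-i)}{dm-i}=\prod_{i=0}^{n-1}\Bigl(1-\frac{(d-1)i}{dm-i}\Bigr).
\end{equation}
For $m=\kappa n^2$ and $i<n$ we have $dm-i\ge dm-n$. Using $\log(1-x)\ge -x-x^2$ for $x\le1/2$, which holds for $n$ beyond a threshold depending only on $\kappa$, we obtain
\begin{equation}
\log p_{\rm e}\ge -\frac{(d-1)n^2}{2(dm-n)}-O(1/n)\to-\frac{d-1}{2d\kappa}.
\end{equation}
Hence $p_{\rm e}\ge \tfrac12 e^{-(d-1)/(2d\kappa)}$ for large $n$.

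\textbf{Step 3: small $n$.} For the finitely many $n$ below the threshold, each factor is strictly positive whenever $m\ge n$, so taking the minimum gives a positive constant. If instead $m<n$ for some small $n$ (which happens when $\kappa n<1$), then $p_{\rm e}=0$, and the statement has to be read for $m\ge n$ or with $m=\lceil\kappa n^2\rceil\ge n$; we will note this caveat. Setting $c_{\kappa,d}$ to the minimum of the large-$n$ bound and these finitely many values completes the proof.

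\textbf{Main obstacle.} The delicate point is Step 1. We must justify that the post-selection/decoding readout $c_j$ coincides exactly with block occupation number one. We must also justify that averaging over $\mathbf V$ commutes with the entanglement to the auxiliaries. This holds because the flag measurement acts only on the $Q,\widetilde Q$ occupation sector, so its Born probability depends only on the reduced state. The Haar-invariance argument then applies to each pure Slater determinant in the mixture separately, and linearity completes the step.
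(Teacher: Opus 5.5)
Your proposal is correct and follows essentially the same route as the paper. Haar invariance makes every $n$-subset of the $dm$ orbitals equally likely on average: the paper gets this from transitivity of output-mode permutations plus normalization, and you get it from the averaged Slater determinant being maximally mixed on $\wedge^n\C^{dm}$. Both then use the identical count $\binom{m}{n}d^n/\binom{dm}{n}$ and the same logarithmic asymptotics giving $e^{-(d-1)/(2d\kappa)}$. Your explicit treatment of the finitely many small $n$ (each factor is positive when $m\ge n$, and $p_{\rm e}=0$ if $\kappa n^2<n$) fills in a step the paper leaves implicit when it passes from the limit to a bound uniform in $n$. Your opening plan to bound the expected number of colliding pairs is never used and can be dropped.
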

\noindent
A proof is given in the Supplemental Material (SM); Fig.~\ref{fig:collision} provides a numerical check against the asymptotic constant $p_\infty=\exp[-(1-1/d)/(2\kappa)]$. Unless stated otherwise, all subsequent numerical data use $\kappa=1/2$ (i.e., $m=n^2/2$).

\medskip
\noindent
Conditioned on such a record, the FLO layer induces an $n$-qudit operator from the occupied input blocks to the selected output blocks. To express the fermionic exchange explicitly, we use the permutation operators $\mathcal P_\sigma$ on $(\C^d)^{\otimes n}$.

\begin{lemma}[FLO-induced determinantal kernel]
\label{lem:ftpd}
Conditioned on a collision-free record $\boldsymbol c$ (hence a post-selected block sub-matrix
$\mathbf S$), the encoded number-conserving FLO circuit induces on the selected $n$ blocks the operator
\begin{equation}
\label{eq:ftpd-maintext}
\det_{\otimes}(\mathbf S)
:= \sum_{\sigma\in \mathfrak{S}_n}\sgn(\sigma)
 \Bigl(\bigotimes_{t=1}^n S_{t,\sigma(t)}\Bigr)\mathcal P_{\sigma^{-1}}.
\end{equation}
\end{lemma}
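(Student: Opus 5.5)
We compute the action of the encoded FLO circuit on the relevant single-occupancy sectors directly in second quantization, and then read off the induced map on internal labels. By the local encoding \eqref{eq:local-encode}, a product input $\bigotimes_{k=1}^n|\alpha_k\rangle_{Q_{m-n+k}}$ is mapped to the Fock state $\prod_{k=1}^n a^\dagger_{m-n+k,\alpha_k}|{\rm vac}\rangle$, with a fixed ordering of creation operators (increasing $k$) that we fix once and for all. A number-conserving FLO unitary $U$ with single-particle propagator $\mathbf V$ acts by $U a^\dagger_{k,\alpha}U^\dagger=\sum_{j,\gamma}(V_{j,k})_{\gamma\alpha}a^\dagger_{j,\gamma}$, and $U|{\rm vac}\rangle=|{\rm vac}\rangle$. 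Hence the output is
\[
\prod_{k=1}^n\Bigl(\sum_{j,\gamma}(V_{j,m-n+k})_{\gamma\alpha_k}a^\dagger_{j,\gamma}\Bigr)|{\rm vac}\rangle .
\]

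Next we project onto the collision-free record with occupied blocks $\ell_1<\cdots<\ell_n$, each singly occupied. Expanding the product, only terms where the $n$ creation operators land on distinct selected blocks survive. Each such term is indexed by a bijection $\tau$ assigning input $k$ to output block $\ell_{\tau(k)}$. We then reorder the creation operators into the canonical output order (increasing $t$), which produces $\sgn(\tau)$ by anticommutation. Decoding with $\mathcal D_{\ell_t}$ maps $a^\dagger_{\ell_t,\gamma}|{\rm vac}\rangle_{\ell_t}$ back to $|\gamma\rangle_{Q_t}$; the fixed local basis unitaries and any fixed global ordering phase are absorbed into the convention. Writing $\sigma=\tau^{-1}$, the amplitude for output labels $\gamma_1,\dots,\gamma_n$ is
\[
\sum_\sigma \sgn(\sigma)\prod_t (S_{t,\sigma(t)})_{\gamma_t\alpha_{\sigma(t)}},
\]
with $S_{t,k}=V_{\ell_t,m-n+k}$ as in step~(iii).

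The final step is to recognize this expression as a matrix element of \eqref{eq:ftpd-maintext}. We fix the convention $\mathcal P_\pi|\alpha_1,\dots,\alpha_n\rangle=|\alpha_{\pi^{-1}(1)},\dots,\alpha_{\pi^{-1}(n)}\rangle$. Then $\mathcal P_{\sigma^{-1}}|\boldsymbol\alpha\rangle=|\alpha_{\sigma(1)},\dots,\alpha_{\sigma(n)}\rangle$, and applying $\bigotimes_t S_{t,\sigma(t)}$ gives exactly $\prod_t(S_{t,\sigma(t)})_{\gamma_t\alpha_{\sigma(t)}}$ in the $\langle\boldsymbol\gamma|$ component. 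Summing over $\sigma$ with signs yields $\langle\boldsymbol\gamma|\det_\otimes(\mathbf S)|\boldsymbol\alpha\rangle$. By linearity in the input, this extends to the entangled Bell-initialized inputs, since the $A_j$ registers are untouched.

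The main obstacle is bookkeeping rather than substance. One must check that the Jordan–Wigner/encoding ordering conventions, the decoding unitaries, and the permutation-operator convention combine to give exactly $\sgn(\sigma)$ with $\mathcal P_{\sigma^{-1}}$, rather than $\mathcal P_\sigma$ or an extra record-dependent sign. I would handle this by verifying that any leftover sign is a global phase depending only on $\boldsymbol c$, since the input and output orderings are fixed. Such a phase is irrelevant to the conditional operator up to normalization. Finally, the conditioning affects only the normalization by $p_{\rm e}$, consistent with Lemma~\ref{lem:post-selection-rate}.
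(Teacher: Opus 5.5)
Your proposal is correct and follows essentially the same route as the paper's proof. You encode the input as an ordered product of creation operators, push it through $U(\mathbf V)$ via the single-particle action, keep only the terms landing one-per-block on $\ell_1<\cdots<\ell_n$, pick up $\sgn(\sigma)$ from reordering, and match the coefficients $\prod_t (S_{t,\sigma(t)})_{\gamma_t,\alpha_{\sigma(t)}}$ to matrix elements of $\det_{\otimes}(\mathbf S)$ under the same convention $\mathcal P_\pi|\boldsymbol\alpha\rangle=|\alpha_{\pi^{-1}(1)},\dots,\alpha_{\pi^{-1}(n)}\rangle$.

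One small correction to your closing remark: the lemma concerns the unnormalized post-selected operator. If you do normalize, the factor is the Born probability of the specific record $\boldsymbol c$ for the given $\mathbf V$ and input. It is not the rate $p_{\rm e}$, which is averaged over Haar instances and summed over all collision-free records.
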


\noindent
Here $\mathfrak S_n$ denotes the symmetric group, see SM for the proof.
Lemma~\ref{lem:ftpd} gives the post-selected FLO kernel to be contracted by the monitored fusion layer.

\smallskip
\paragraph*{Branch amplitudes and non-commutative structure.}
Let $X,Z$ denote the generalized Pauli operators on $\C^d$. For a single Bell outcome
$\beta=(a,b)\in\Z_d^2$, define the Weyl operator
$\hat D_{\beta}:=X^a Z^b$.
We use the Bell basis $|\Phi_{\beta}\rangle := (\id\otimes \hat D_{\beta}^\dagger)|\Phi_d^+\rangle$.
Conditioning on a Bell-measurement outcome $\beta$ implements the unnormalized teleportation update
$\xi \mapsto d^{-1}\xi^{\top}\hat D_{\beta}$ on the boundary leg.
Thus each fusion step yields an ordered matrix update: the measured block contributes $\xi^{\top}$ and the Bell outcome
contributes the byproduct $\hat D_{\beta}$.
With this convention the byproduct in the update is $\hat D_{\beta}$; any transpose-induced relabeling
of $\beta$ (for $d>2$) can be absorbed into the outcome label (see SM).
In our setting, $\xi$ is always one of the blocks $S_{t,k}$.

\begin{figure}[t]
\centering
\includegraphics[width=1\columnwidth]{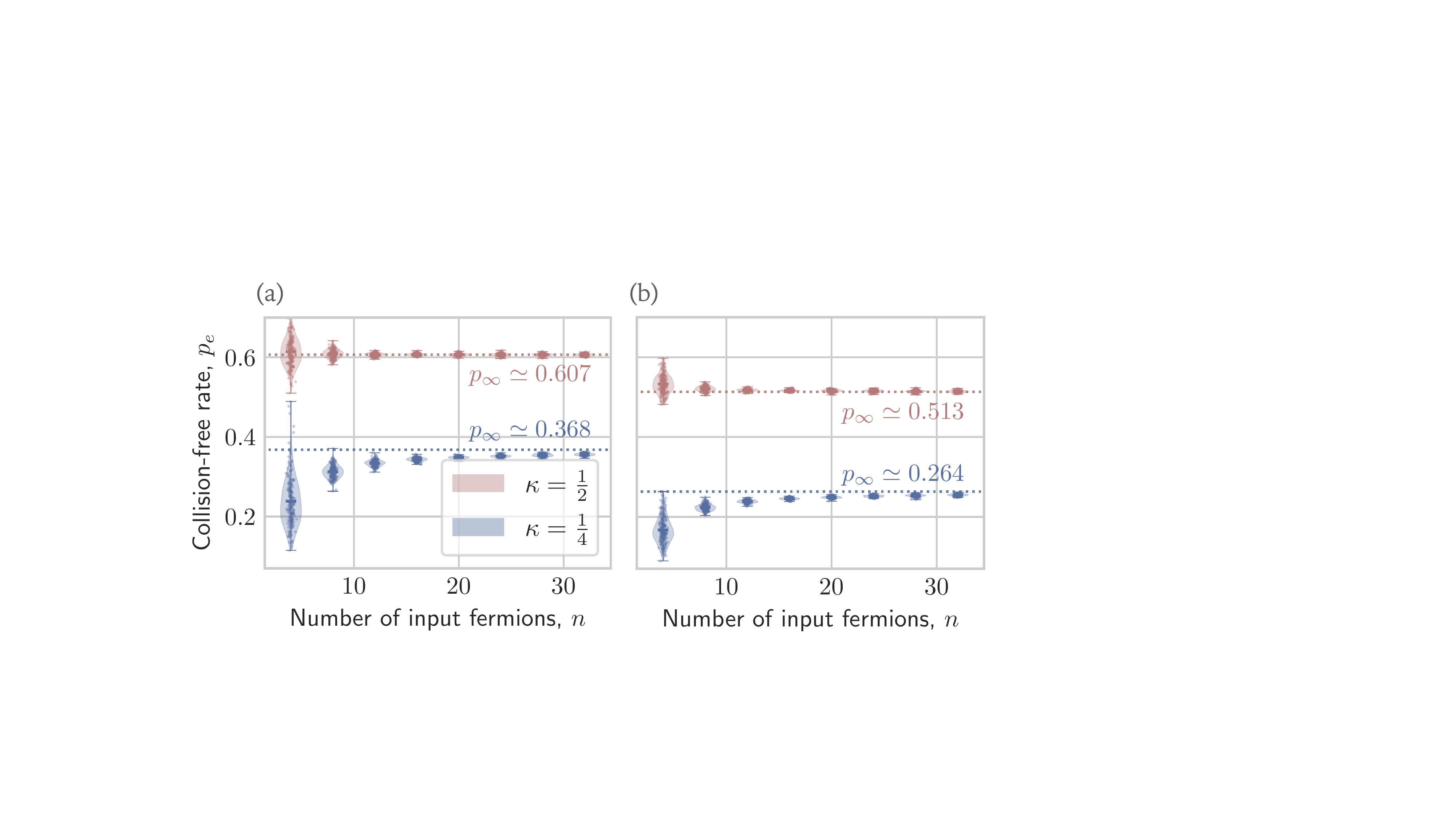}
\caption{
Collision-free post-selection probability $p_{\rm e}(n,m,d)$ in the dilute regime ($m=\kappa n^2$).
Empirical distributions of per-instance rates (over 200 Haar-random realizations) are compared against the asymptotic prediction $p_\infty=\exp[-(1-1/d)/(2\kappa)]$ (dashed lines, Lemma~\ref{lem:post-selection-rate}).
(a) $d=2$ and (b) $d=3$, both with $\kappa\in\{1/2,1/4\}$.
}
\label{fig:collision}
\end{figure}

Fix fusion outcomes ${\boldsymbol\beta}=(\beta_1,\dots,\beta_n)\in(\Z_d^2)^n$, where each
$\beta_t=(a_t,b_t)$, and label fusion steps in the fixed order
$t=1,\dots,n$ (after feedforward routing).
We absorb byproducts into \emph{dressed blocks} $\widetilde S_{t,k}:=S_{t,k}^{\top}\hat D_{\beta_t}$.
The post-selected fusion layer defines a linear contraction map $\mathcal C_{\boldsymbol\beta}$ that inserts the
Bell-projection updates at the prescribed fusion steps, yielding the branch operator
\begin{equation}
\label{eq:T-def}
\mathcal{T}_{\boldsymbol\beta}(\mathbf S)\ :=\ \mathcal C_{\boldsymbol\beta}\left(\det_{\otimes}(\mathbf S)\right)\ \in\ \M_d(\C).
\end{equation}
(Here $\mathcal C_{\boldsymbol\beta}$ includes the fixed $d^{-1}$ factors from each Bell-fusion readout branch.)
We then define the branch amplitude
$\mathcal A_d(\boldsymbol c,{\boldsymbol\beta})=\langle r|\mathcal{T}_{\boldsymbol\beta}(\mathbf S)|\ell\rangle$
and the unnormalized joint branch weight $p(\boldsymbol c,{\boldsymbol\beta})=|\mathcal A_d(\boldsymbol c,{\boldsymbol\beta})|^2$.
We sample $\boldsymbol\beta$ from $p(\boldsymbol\beta\,|\,\boldsymbol c)\propto p(\boldsymbol c,{\boldsymbol\beta})$.
For fixed $d$, changing $|\ell\rangle,|r\rangle$ only selects a matrix element of the same
branch operator $\mathcal{T}_{\boldsymbol\beta}(\mathbf S)$ and does not affect its ordered polynomial structure.

\begin{proposition}[Non-commutative branch structure]
\label{prop:nc-amp}
Fix fusion outcomes ${\boldsymbol\beta}=(\beta_1,\dots,\beta_n)$ and define the dressed blocks
$\widetilde S_{t,k}:=S_{t,k}^{\top}\hat D_{\beta_t}$ in the fixed contraction order $t=1,\dots,n$.
Then $\mathcal{T}_{\boldsymbol\beta}(\mathbf S)$ is an \emph{ordered non-commutative trace polynomial} in the dressed blocks, with scalar trace factors generated by closed wiring loops.
Formally, $\mathcal{T}_{\boldsymbol\beta}(\mathbf S)$ is the matrix evaluation of an element of the free trace-polynomial algebra \(\C\langle \widetilde S\rangle_{\Tr}\).
\end{proposition}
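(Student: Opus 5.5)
The plan is to expand $\mathcal T_{\boldsymbol\beta}(\mathbf S)$ term by term and evaluate each term diagrammatically. By Lemma~\ref{lem:ftpd}, $\det_{\otimes}(\mathbf S)$ is a signed sum over $\sigma\in\mathfrak S_n$ of $\bigl(\bigotimes_t S_{t,\sigma(t)}\bigr)\mathcal P_{\sigma^{-1}}$. The map $\mathcal C_{\boldsymbol\beta}$ is linear, since it is a composition of Bell projections, partial contractions and the fixed $d^{-n}$ normalization. Hence
\[
\mathcal T_{\boldsymbol\beta}(\mathbf S)=\sum_{\sigma\in\mathfrak S_n}\sgn(\sigma)\,\mathcal C_{\boldsymbol\beta}\Bigl(\bigl(\textstyle\bigotimes_t S_{t,\sigma(t)}\bigr)\mathcal P_{\sigma^{-1}}\Bigr),
\]
and it suffices to show that each summand is a scalar multiple of a product of traces of words in the dressed blocks times one ordered word. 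The free trace-polynomial algebra $\C\langle \widetilde S\rangle_{\Tr}$ is closed under signed sums, so the conclusion follows once this is shown.

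For a fixed $\sigma$, I would represent the summand as a tensor network. The $n$ input legs (Bell partners $A_{1},\dots,A_n$ of the occupied blocks) are connected to the output legs $Q_1,\dots,Q_n$ through the boxes $S_{t,\sigma(t)}$, after the wiring permutation $\mathcal P_{\sigma^{-1}}$. The fusion layer then adds a Bell projection $\langle\Phi_{\beta_t}|$ on each pair $(A_{t-1},Q_t)$. The boundary legs are $A_0$ (input $|\ell\rangle$) and $A_n$ (open). Using the teleportation identity stated before the proposition, each Bell projection acts as a "cup" that contracts the output leg of one box with the next leg, inserting $\hat D_{\beta_t}$ and a transpose. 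Every box thus enters as $S_{t,k}^{\top}\hat D_{\beta_t}=\widetilde S_{t,k}$, with a factor $d^{-1}$. I would fix the transpose and byproduct placement once, using $(\id\otimes M)|\Phi_d^+\rangle=(M^{\top}\otimes\id)|\Phi_d^+\rangle$. I would also absorb the $d>2$ label relabeling as noted in the SM convention, so that the byproduct always sits immediately to the right of its $S^{\top}$ at step $t$.

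The combinatorial core is a path--cycle decomposition. The wiring graph has one edge per box and one pairing edge per fusion step, and every vertex has degree two except the two boundary legs $A_0$ and $A_n$. Its connected components are therefore exactly one open path from $A_0$ to $A_n$ together with disjoint closed cycles. Following the open path gives an ordered product $\widetilde S_{t_1,\sigma(t_1)}\cdots\widetilde S_{t_p,\sigma(t_p)}$. The order is inherited from the fixed fusion order along the path, possibly with the reversal convention, which I would normalize once. Each closed cycle contracts to $\Tr(\widetilde S_{s_1,\sigma(s_1)}\cdots\widetilde S_{s_q,\sigma(s_q)})$, taken in the cyclic order of the cycle. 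The summand is thus $\sgn(\sigma)\,d^{-n}\prod_{\text{cycles}}\Tr(w_c)\cdot w_{\rm path}$, which is the evaluation of a monomial of $\C\langle\widetilde S\rangle_{\Tr}$.

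I expect the main obstacle to be bookkeeping rather than concept. One must check that transposes and byproducts combine consistently into the dressed blocks along both path and cycle segments. A segment traversed "backwards" would produce $\widetilde S^{\top}$ rather than $\widetilde S$, and one must rule this out or show it is harmless. I would first show that with the nearest-neighbor fusion pairing $(A_{t-1},Q_t)$ every box is entered through its $A$-side and exited through its $Q$-side. This forces a consistent orientation, since each box is traversed in the same direction, so no stray transposes appear. I would check this explicitly for $n=2$ and both permutations before giving the general induction on the traversal.
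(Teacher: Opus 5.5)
Your proposal is correct and follows essentially the same route as the paper. It uses linearity of $\mathcal C_{\boldsymbol\beta}$ to reduce to single permutation terms, and the Bell-projection (teleportation) identity to turn each box into a dressed block $\widetilde S_{t,\sigma(t)}$ with a factor $d^{-1}$. It then decomposes each term's wiring into one open $A_0\to A_n$ path (an ordered word) plus closed cycles (trace factors).

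The only cosmetic difference is in the degree argument. You count undirected degrees on the leg graph including the $Q_t$. The paper instead contracts the $Q_t$ and uses in/out-degrees of the directed edges $A_{t-1}\to A_{\sigma(t)}$, which builds in the consistent orientation you plan to check separately.
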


\medskip
\noindent
The key mechanism is a path--cycle decomposition induced by the fixed fusion order:
each permutation wiring decomposes into a unique open boundary path and disjoint directed cycles.
The path contributes a left-multiplied ordered product of dressed blocks, while each cycle yields a scalar trace-loop factor.
Already for $n=2$,
\begin{equation}
\label{eq:n2-pathloop}
\mathcal{T}_{\boldsymbol\beta}(\mathbf S)
=\frac{1}{d^2}\Bigl(\widetilde S_{2,2}\widetilde S_{1,1}
-\Tr(\widetilde S_{2,1})\,\widetilde S_{1,2}\Bigr).
\end{equation}
The swap $\sigma=(2,1)$ creates a directed 1-cycle at the intermediate auxiliary system, producing the loop factor ${\Tr}(\widetilde S_{2,1})$.
See End Matter for the general derivation and an explicit $n=3$ expansion. In particular, each permutation term contributes exactly one dressed block from each fusion step $t$, so
$\mathcal{T}_{\boldsymbol\beta}(\mathbf S)$ is multilinear in the row-$t$ blocks $\{\widetilde S_{t,k}\}_{k=1}^n$.

\begin{figure}[t]
\centering
\includegraphics[width=1\columnwidth]{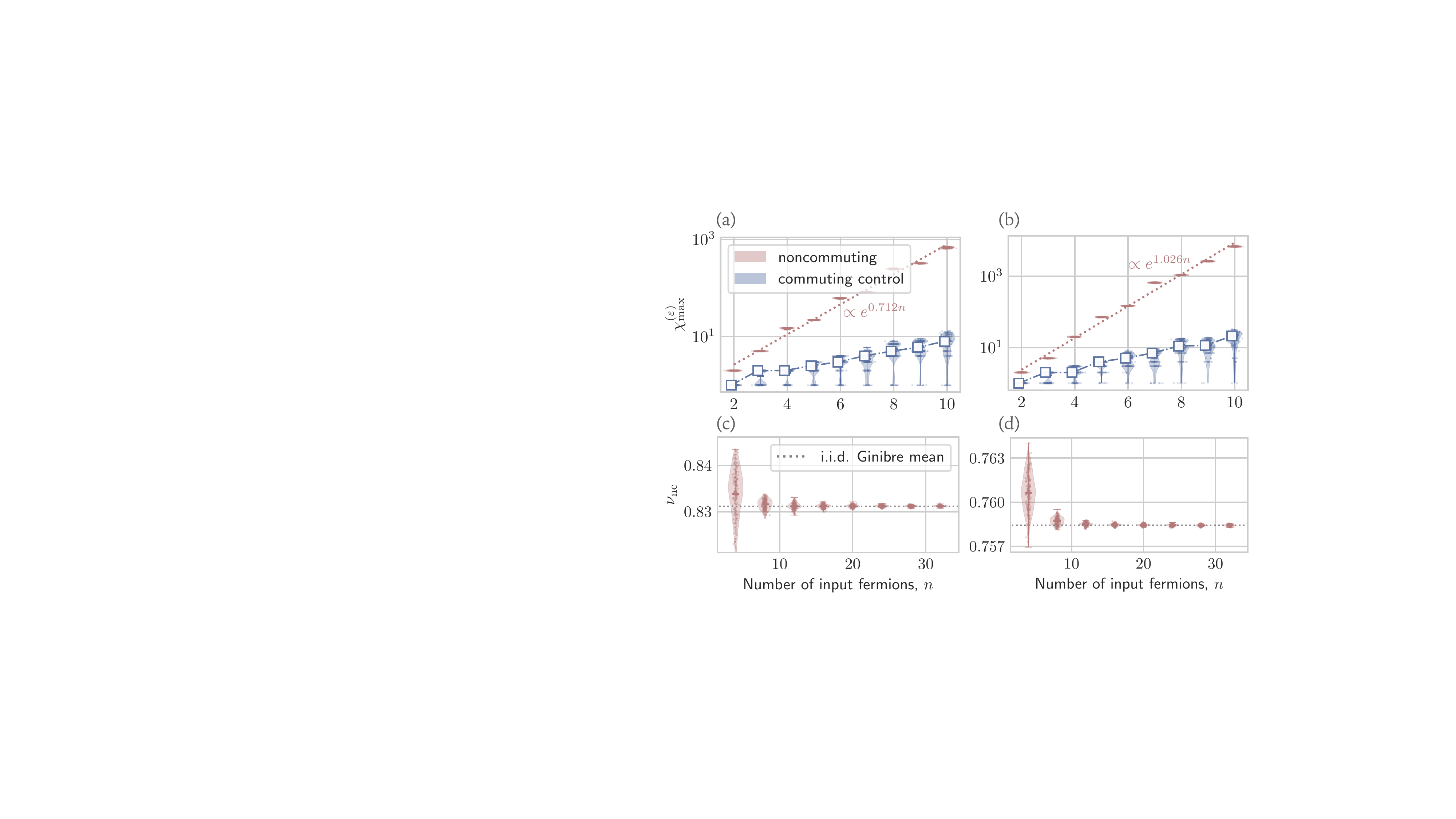}
\caption{
Order-respecting contraction memory and non-commutativity diagnostics for monitored-FLO branches.
(a, b) Truncated minimal MPO bond dimension $\chi_{\max}^{(\varepsilon)}$ ($\varepsilon=10^{-3}$) in the enforced fusion order
(red: monitored-FLO; blue: commuting control with $S_{t,k}\propto \id_d$ and $Z$-only byproducts; dash-dotted: exponential fit).
(c, d) Normalized commutator diagnostic $\nu_{\mathrm{nc}}$ of the dressed blocks [Eq.~\eqref{eq:nc-score-def}];
dotted line: i.i.d.\ Ginibre reference mean. Panels (c, d) are shown for larger $n$ to highlight the concentration of $\nu_{\mathrm{nc}}$ with system size.
(a, c) $d=2$. (b, d) $d=3$. Statistics are computed over 200 independent instances per $n$.
}
\label{fig:bond}
\end{figure}

\smallskip
\paragraph*{Numerical diagnostics.}
The enforced contraction order (along the fusion geometry) reduces each collision-free branch to a one-dimensional sweep over steps $t=1,\ldots,n$, with local updates linear in the dressed blocks $\{\widetilde S_{t,k}\}_{k=1}^n$.
If branch amplitudes were still efficiently contractible in this enforced order, the minimal fusion-order--respecting MPO would have small bond dimension.
The observed rapid growth provides a diagnostic of the memory required by order-respecting single-pass contractions in the enforced fusion order.

Concretely, $\mathcal{T}_{\boldsymbol\beta}(\mathbf S)$ admits an exact fusion-order--respecting operator-valued MPO representation
\begin{equation}
\label{eq:mpo-rep}
\mathcal{T}_{\boldsymbol\beta}(\mathbf S)
=\sum_{x_1,\ldots,x_{n-1}}
\mathsf{T}^{(n)}_{{\boldsymbol\beta},(x_{n-1},x_n)}\cdots
\mathsf{T}^{(2)}_{{\boldsymbol\beta},(x_1,x_2)}\,
\mathsf{T}^{(1)}_{{\boldsymbol\beta},(x_0,x_1)}.
\end{equation}
Here $x_0$ and $x_n$ are fixed boundary bond indices (equivalently absorbed into boundary vectors), and the sum runs over $x_1,\ldots,x_{n-1}$.
Let $\chi_t$ denote the minimal bond dimension across cut $t$ among all such fusion-order--respecting representations, and set $\chi_{\max}:=\max_t\chi_t$.
Then $\chi_{\max}$ lower-bounds the virtual memory required by any single-pass contraction that processes the fusion steps in the enforced order~\cite{Schollwock2011The,Orus2014A}.
In our numerics we report the $\varepsilon$-truncated value $\chi_{\max}^{(\varepsilon)}$ (with $\varepsilon=10^{-3}$), computed by sequential partial contractions in the enforced fusion order with singular-value truncation at tolerance $\varepsilon$.
Figs.~\ref{fig:bond}(a, b) show that $\chi_{\max}^{(\varepsilon)}$ grows rapidly with $n$ in the monitored-FLO ensemble, while a commuting-control benchmark strongly suppresses this growth.

In a symbolic benchmark where the dressed blocks are treated as algebraically independent non-commuting generators (and trace-loop factors as formal scalars), this sequential memory is provably exponential:

\begin{theorem}[Exponential sequential-memory barrier]
\label{thm:memory-barrier}
In the above generic model, the minimal fusion-order--respecting bond dimension obeys
$\chi_t=\Omega_d \bigl(\binom{n}{t}\bigr)$ for all $t$, and in particular
$\chi_{\lfloor n/2\rfloor} =2^{\Omega(n)}$.
\end{theorem}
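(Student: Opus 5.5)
The plan is the standard one for MPO lower bounds: $\chi_t$ equals the rank of a flattening of $\mathcal T_{\boldsymbol\beta}$ across cut $t$, so it suffices to exhibit $\binom{n}{t}$ linearly independent "left states". In a fusion-order–respecting representation \eqref{eq:mpo-rep}, the partial contraction $L^{(t)}_{x_t}=\mathsf T^{(t)}\cdots\mathsf T^{(1)}$ depends only on the row-$1,\dots,t$ dressed blocks, and the remainder $R^{(t)}_{x_t}$ only on rows $t+1,\dots,n$. Hence
\begin{equation*}
\mathcal T_{\boldsymbol\beta}=\sum_{x_t=1}^{\chi_t} R^{(t)}_{x_t}\,L^{(t)}_{x_t},
\end{equation*}
with operator-valued factors. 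In the generic model the blocks are free generators and the trace loops are formal scalars. I will therefore treat $\mathcal T_{\boldsymbol\beta}$ as an element of $\mathcal R_{>t}\otimes\mathcal L_{\le t}$, where $\mathcal L_{\le t}$ is the span of trace-polynomials in the rows $\le t$ with a matrix (word) output. The quantity $\chi_t$ is then bounded below by the tensor rank of this element, i.e.\ the rank of the associated bilinear flattening. The $\Omega_d$ constant absorbs the $d\times d$ matrix-valued bond and the $d^{-n}$ prefactors.

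Next I use the path–cycle decomposition of Proposition~\ref{prop:nc-amp} and the End Matter. Each permutation $\sigma$ contributes $\pm d^{-n}$ times a path word $\widetilde S_{t_p,\sigma(t_p)}\cdots\widetilde S_{t_1,\sigma(t_1)}$ times a product of cycle traces. The key bookkeeping is the set $K_t(\sigma)=\sigma(\{1,\dots,t\})\subseteq\{1,\dots,n\}$ of columns consumed by the rows $\le t$, a $t$-subset. Every left factor produced by a term with $K_t(\sigma)=K$ uses exactly the column labels in $K$, and the matching right factor uses exactly those in $K^c$. Since the generators $\widetilde S_{t,k}$ are algebraically independent, this column content acts as a grading: group the terms by $K$, so that
\begin{equation*}
\mathcal T_{\boldsymbol\beta}=\sum_{|K|=t} \mathcal T_K,\qquad \mathcal T_K\in\mathcal R_{>t}^{(K^c)}\otimes\mathcal L_{\le t}^{(K)},
\end{equation*}
where the superscripts denote the corresponding graded pieces. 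Distinct $K$ give left pieces in linearly independent graded subspaces, and likewise right pieces. The flattening is therefore block diagonal over the $\binom{n}{t}$ subsets, and its rank is the sum of the block ranks. One caveat: a representation need not respect the grading, but rank is basis independent, so projecting onto graded components gives the lower bound directly.

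It then remains to show that each block $\mathcal T_K$ is nonzero, so that it has rank at least $1$. For this I pick a single permutation with $K_t(\sigma)=K$, say the order-preserving bijection of rows $\le t$ onto $K$ and of rows $>t$ onto $K^c$. I then argue that its monomial (path word times loop traces) cannot be cancelled by any other $\sigma'$. In the free trace algebra, distinct permutations yield distinct monomials, because the word together with the traced cycles records each pair $(t,\sigma(t))$ through the generator $\widetilde S_{t,\sigma(t)}$. The map from $\sigma$ to its monomial is therefore injective, and no cancellation occurs. This gives $\chi_t\ge\binom nt$ up to the $d$-dependent constant, and Stirling gives $\chi_{\lfloor n/2\rfloor}\ge\binom{n}{\lfloor n/2\rfloor}=2^{\Omega(n)}$.

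The main obstacle is making the flattening and grading argument rigorous in the operator-valued, trace-polynomial setting. Two points need care. First, a cycle may straddle the cut, so its trace factor is not a product of a left and a right scalar. Such a cycle must be routed through the bond, as a pair of open legs, and I must check that it is still accounted for by $K$. Second, the relevant rank is over the free algebra rather than for a particular matrix evaluation. I would handle both by passing to a multilinear "open-leg" flattening in which every trace is cut into its index contractions. Left states then become multilinear tensors in the row-$\le t$ generators with extra open indices, and the column-set grading survives verbatim.
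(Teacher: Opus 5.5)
Your route is sound but organized differently from the paper's. The paper never defines a flattening of $\mathcal T_{\boldsymbol\beta}$ itself. For each pair of $t$-subsets $(I,J)$ it substitutes $\widetilde S_{s,k}\in\{0,\id_d\}$, so that rows $s\le t$ select the columns of $I$ in increasing order and rows $s>t$ select those of $\{1,\dots,n\}\setminus J$ in increasing order. Then $M_{I,J}=\Tr(\mathrm{ev}_{I,J}(\mathcal T_{\boldsymbol\beta}))$ vanishes unless $I=J$. Any order-respecting contraction gives $M_{I,J}=\sum_x\Tr(B_{J,x}A_{I,x})$, hence $\binom nt=\rank M\le d^2\chi_t$. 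The single term surviving at $I=J$ is exactly your order-preserving representative, so your grading by $K=\sigma(\{1,\dots,t\})$ and the paper's diagonal witness detect the same $\binom nt$ sectors. Your version gives a more intrinsic statement: the flattening is block diagonal, with rank equal to the sum of the block ranks, which may exceed $\binom nt$. The paper's version only factors concrete evaluated scalars through the cut, so it needs neither a non-cancellation argument nor a well-defined flattening in the free trace algebra.

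That last point matters more than your final paragraph suggests, because straddling cycles are not the only obstruction. Path words also interleave rows across the cut. For example, the $\sigma=(2,3,1)$ term at $t=2$ is $\widetilde S_{2,3}\widetilde S_{3,1}\widetilde S_{1,2}$, with row $3$ sandwiched between rows $2$ and $1$. Monomials in $\C\langle\widetilde S\rangle_{\Tr}$ are linearly independent, and this one appears with nonzero coefficient. So $\mathcal T_{\boldsymbol\beta}$ is not in the image of the multiplication map $\mathcal R_{>t}\otimes\mathcal L_{\le t}\to\C\langle\widetilde S\rangle_{\Tr}$, and the flattening in your first two paragraphs is undefined, not merely hard to bound. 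The open-leg (entry-level) flattening must therefore be your definition from the start, not a later patch. With commuting entries the offending word splits as $\sum_{i,j}(\widetilde S_{3,1})_{ij}\,\widetilde S_{2,3}e_ie_j^{\top}\widetilde S_{1,2}$, with bond $d^2$. The $K$-grading survives unchanged. Non-cancellation also persists: distinct $\sigma$ draw their entries from distinct generator sets $\{\widetilde S_{s,\sigma(s)}\}_s$, and each single-$\sigma$ term is a nonzero polynomial. With that change your argument yields $\chi_t=\Omega_d(\binom nt)$, as claimed. Incidentally, neither argument uses non-commutativity: the paper's witness substitutes only $0$ and $\id_d$, and your entry-level grading works verbatim for scalar blocks $a_{t,k}\id_d$.
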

\noindent
Intuitively, across cut $t$ a sequential contraction must distinguish which subset of the $n$ occupied input legs has been
routed through the first $t$ fusion steps; in the non-commutative regime these sectors are generically linearly independent.
A proof is given in the SM.

\smallskip
\noindent
Although Theorem~\ref{thm:memory-barrier} is proven in the free trace-polynomial benchmark, where the dressed blocks are treated as algebraically independent non-commuting generators,
in the monitored-FLO ensemble the $\widetilde S_{t,k}=S_{t,k}^{\top}\hat D_{\beta_t}$ are correlated subblocks of a single Haar unitary.
To assess whether these correlations suppress non-commutativity in the dilute regime, we report the normalized commutator diagnostic
\begin{equation}
\label{eq:nc-score-def}
\nu_{\mathrm{nc}}(\boldsymbol c,{\boldsymbol\beta})
:= \mathbb E_{(t,k)\neq(t',k')}
\frac{\bigl\|[\widetilde S_{t,k},\widetilde S_{t',k'}]\bigr\|_{F}}
{\|\widetilde S_{t,k}\|_{F}\,\|\widetilde S_{t',k'}\|_{F}},
\end{equation}
where $\|\cdot\|_F$ denotes the Frobenius norm and the average is taken over distinct pairs of dressed blocks.
Figs.~\ref{fig:bond}(c, d) show that $\nu_{\mathrm{nc}}$ remains close to the corresponding i.i.d.\ Ginibre reference mean~\cite{Ginibre1965Statistical},
indicating that typical dilute instances remain strongly non-commutative rather than effectively commuting.

As a complementary signature, we probe the conditional distribution $p(\boldsymbol\beta\,|\,\boldsymbol c)$ via its second moment (often referred to as the output collision probability in random-circuit sampling~\cite{Hangleiter2023Computational})
$\Gamma(\boldsymbol c):=\sum_{{\boldsymbol\beta}}p(\boldsymbol\beta\,|\,\boldsymbol c)^2$.
For a Haar-random state in dimension $d^{2n}$, $\mathbb E[\Gamma]=2/(d^{2n}+1)$~\cite{Collins2006Integration, Harrow2009Random}; we therefore report the Haar-normalized ratio
$(d^{2n}+1)\Gamma/2$ in Fig.~\ref{fig:cp}(a, b) (benchmark $1$).
The monitored-FLO ensemble remains close to this benchmark, while the commuting control shows systematic deviations that grow with $n$.
At the instance level, the rescaled weights $x:=d^{2n}p(\boldsymbol\beta\,|\,\boldsymbol c)$ follow the Porter-Thomas distribution~\cite{Porter1956Fluctuations} in the monitored-FLO ensemble but not in the commuting control (Fig.~\ref{fig:cp}(c, d)), consistent with anti-concentration in the standard sense that a constant fraction of outcomes satisfy $p(\boldsymbol\beta\,|\,\boldsymbol c)=\Omega(1/d^{2n})$~\cite{Harrow2009Random, Dalzell2022Random}.

\smallskip
\paragraph*{Hardness conjecture.}
Motivated by the ordered non-commutative structure of $\mathcal T_{\boldsymbol\beta}(\mathbf S)$ (Proposition~\ref{prop:nc-amp}; cf.\ SM for worst-case exact-value benchmarks) and by our numerical evidence for strong non-commutativity, rapid growth of the order-respecting MPO bond dimension, and anti-concentration (Figs.~\ref{fig:bond} and~\ref{fig:cp}), we formulate the following average-case hardness conjecture, in the standard Stockmeyer framework for sampling-advantage proposals~\cite{Stockmeyer1983,Aaronson2011The,Aaronson2017Complexity,Hangleiter2023Computational} (assuming a standard finite-precision discretization of the Haar instance so inputs admit $\poly(n)$-bit descriptions).

\begin{figure}[t]
\centering
\includegraphics[width=1.0\columnwidth]{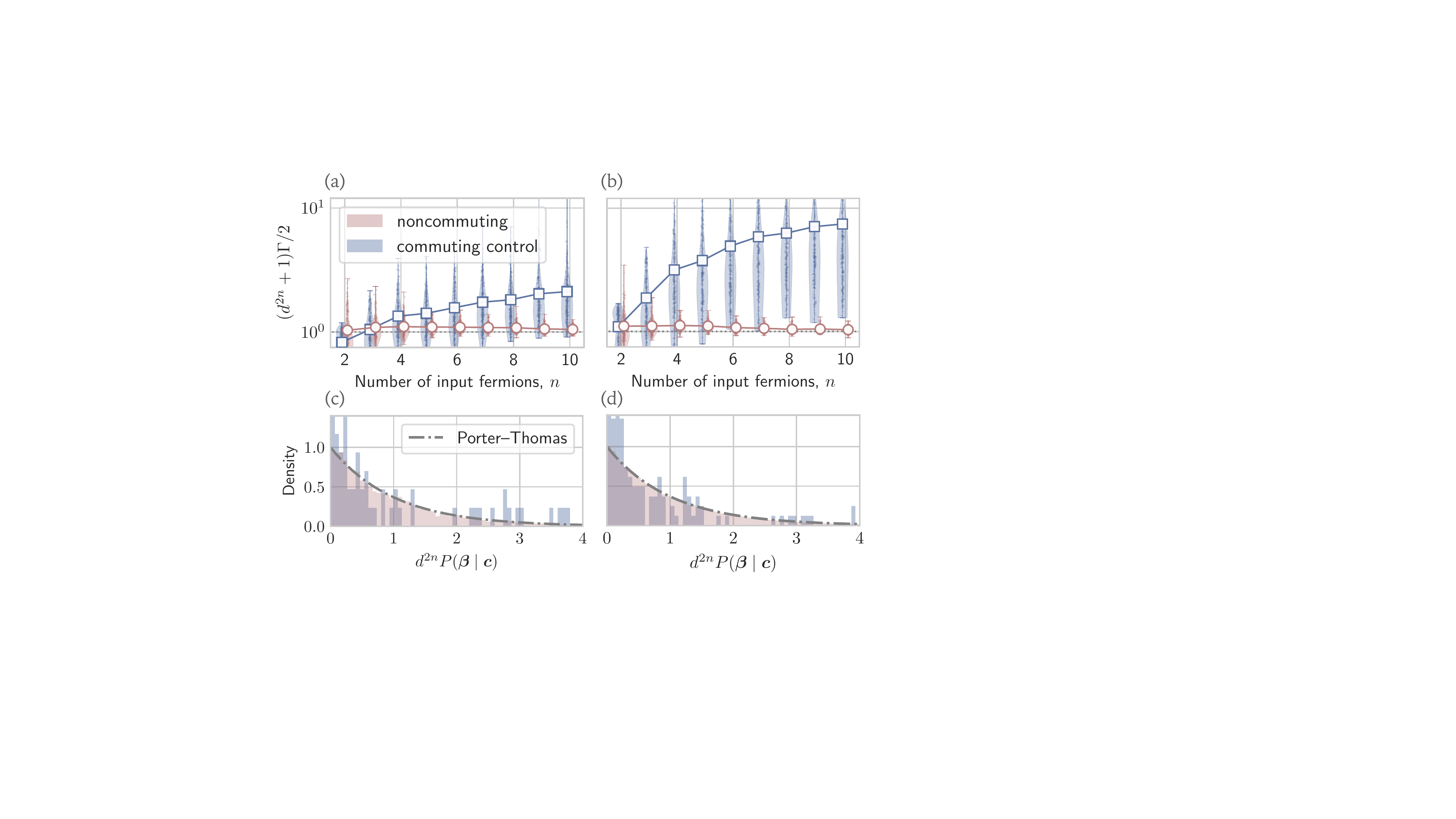}
\caption{
Second-moment ($\Gamma(\boldsymbol c)=\sum_{\boldsymbol\beta}p(\boldsymbol\beta\,|\,\boldsymbol c)^2$) and Porter-Thomas diagnostics of monitored-FLO branches.
(a, b) Haar-normalized second moment versus $n$.
Red: monitored-FLO ensemble.
Blue: commuting-control benchmark ($S_{t,k}\propto \id_d$ with $Z$-only byproducts).
The dashed line marks the Haar random-state average. Statistics are computed over 200 independent instances per $n$.
(c, d) Distributions of rescaled branch probabilities $x:=d^{2n}p(\boldsymbol\beta\,|\,\boldsymbol c)$ for representative instances with $n=8$, compared to the Porter-Thomas law $P(x)=e^{-x}$ (grey dashed).
(a, c) $d=2$. (b, d) $d=3$.
}
\label{fig:cp}
\end{figure}

\begin{conjecture}[Average-case hardness of monitored-FLO branches]\label{conj:avg-hard-sampling}
Fix $d\ge2$ and $\kappa>0$, and set $m=\kappa n^2$.
For Haar-random number-conserving FLO instances conditioned on a collision-free record $\boldsymbol c$,
estimating $p(\boldsymbol\beta\,|\,\boldsymbol c)$ within relative error $1/\poly(n)$ is $\#\mathrm P$-hard
(on average over instances and for a constant fraction of outcomes ${\boldsymbol\beta}\in\Z_d^{2n}$).
\end{conjecture}

Stockmeyer-type approximate counting arguments then give a 
conditional sampling hardness (see the SM).

\begin{theorem}[Conditional hardness of approximate sampling]
\label{thm:cond-sampling}
Assume Conjecture~\ref{conj:avg-hard-sampling} and that typical 
instances anti-concentrate, i.e., a constant fraction of outcomes 
satisfy $p(\boldsymbol\beta\,|\,\boldsymbol c)=\Omega(d^{-2n})$
(cf.\ Fig.~\ref{fig:cp}).
Then no classical probabilistic polynomial-time algorithm can 
approximately sample from $p(\boldsymbol\beta\,|\,\boldsymbol c)$ 
within total variation distance $1/\poly(n)$ on a non-negligible 
fraction of monitored-FLO instances, unless the polynomial 
hierarchy collapses.
\end{theorem}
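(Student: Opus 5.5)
The plan is the standard Stockmeyer reduction, as in boson sampling~\cite{Aaronson2011The} and IQP~\cite{Bremner2016Average-Case}. The argument is by contradiction. Suppose a classical randomized polynomial-time sampler $\mathcal S$ exists. On input a (discretized) collision-free instance $(\mathbf S,\boldsymbol c)$, it outputs samples from a distribution $q(\boldsymbol\beta\,|\,\boldsymbol c)$ with $\sum_{\boldsymbol\beta}|q-p|\le\epsilon=1/\poly(n)$, and it succeeds on a non-negligible fraction of instances. By Lemma~\ref{lem:post-selection-rate}, collision-free records occur with probability at least $c_{\kappa,d}$. Conditioning the Haar ensemble on such a record therefore changes instance measures by at most a constant factor. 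Hence "non-negligible fraction of monitored-FLO instances" transfers between the joint and conditional ensembles.

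\textbf{Step 1: Stockmeyer estimation.} Fix an outcome $\boldsymbol\beta$. The quantity $q(\boldsymbol\beta\,|\,\boldsymbol c)$ is the acceptance probability of a polynomial-time randomized procedure, namely run $\mathcal S$ and check the output. Stockmeyer's approximate counting therefore gives, in $\mathrm{FBPP}^{\mathrm{NP}}$, an estimate $\tilde q$ with $|\tilde q-q|\le q/\poly(n)$.

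\textbf{Step 2: hiding.} We need a hiding property so that the TV error is not concentrated on the particular $\boldsymbol\beta$ queried. Choose $\boldsymbol\beta$ uniformly from $\Z_d^{2n}$. By Markov's inequality, $|q(\boldsymbol\beta)-p(\boldsymbol\beta)|\le \epsilon/(\delta d^{2n})$ with probability at least $1-\delta$. Since the $\boldsymbol\beta$-dependence enters only through the dressing $\widetilde S_{t,k}=S_{t,k}^{\top}\hat D_{\beta_t}$, a uniform $\boldsymbol\beta$ together with the instance is what Conjecture~\ref{conj:avg-hard-sampling} averages over. I would state this averaging jointly over instances and outcomes. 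Alternatively, one can absorb the Weyl byproducts into a Haar-invariant redefinition of the blocks, $S_{t,k}\mapsto \hat D^{\top}S_{t,k}$ via a local unitary on output block $\ell_t$. This preserves the Haar measure and shows that all outcomes are statistically equivalent.

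\textbf{Step 3: additive to relative error.} Anti-concentration says $p(\boldsymbol\beta\,|\,\boldsymbol c)\ge \gamma d^{-2n}$ for a constant fraction $\gamma'$ of outcomes. Intersect this event with the Markov event by choosing $\delta<\gamma'/2$. On the intersection, the additive error $\epsilon/(\delta d^{2n})$ becomes relative error $\epsilon/(\gamma\delta)$ on $p$, up to the Stockmeyer factor. Taking $\epsilon$ a sufficiently small inverse polynomial yields relative error $1/\poly(n)$ for a constant fraction of $(\mathbf S,\boldsymbol\beta)$ pairs.

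\textbf{Step 4: collapse.} Such an estimate is exactly what the conjecture declares $\#\mathrm P$-hard. We would then obtain $\mathrm{P}^{\#\mathrm P}\subseteq\mathrm{BPP}^{\mathrm{NP}}\subseteq\Sigma_3$. Toda's theorem gives $\mathrm{PH}\subseteq\mathrm P^{\#\mathrm P}$, so the hierarchy collapses to its third level.

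I expect Step 2 to be the main obstacle: making the hiding/averaging claim rigorous in the conditioned ensemble. Two facts must be checked. First, conditioning on $\boldsymbol c$ leaves $\mathbf S$ distributed as a (rescaled) block submatrix of a Haar unitary. Second, local Weyl dressings leave this distribution invariant, so that "random outcome" and "random instance" merge. A secondary subtlety is the finite-precision discretization and the normalization of $p(\boldsymbol\beta\,|\,\boldsymbol c)$: the normalization must be computable, or at least cancel, so that relative-error estimates of the conditional and joint weights are interchangeable. I would handle this by noting that the normalizing sum $\sum_{\boldsymbol\beta}p(\boldsymbol c,\boldsymbol\beta)$ is itself a Stockmeyer-estimable acceptance probability.
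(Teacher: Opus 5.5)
Your proposal is correct and follows the same route as the paper's proof: Stockmeyer approximate counting applied to the sampler's distribution $q$, a Markov-type counting bound over the $d^{2n}$ outcomes intersected with the anti-concentrated outcomes, and a contradiction with Conjecture~\ref{conj:avg-hard-sampling}. The paper's counting step is your Markov step with $\delta$ equal to half the anti-concentrated fraction. Your additions are valid refinements that the paper does not spell out, because its conjecture is already phrased over a constant fraction of outcomes and directly in terms of $p(\boldsymbol\beta\,|\,\boldsymbol c)$; this also makes your normalization detour unnecessary. These additions are the measure transfer via Lemma~\ref{lem:post-selection-rate}, the explicit Toda step, and the Weyl-dressing hiding argument. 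For the last, left-multiplying $\mathbf V$ by $\hat D_{\beta_t}^{\top}$ on each selected output block preserves the Haar measure, commutes with the coarse-grained monitoring, and maps branch $\boldsymbol\beta$ to branch $\mathbf 0$.
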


\paragraph*{Discussion and outlook.} 
We identify measurement-induced non-commutativity as a minimal route by which monitored free-fermion dynamics escape determinantal/Pfaffian structure. 
The key ingredient is fusion-enforced ordering: FLO produces an anti-symmetrized sum over permutations, but mid-circuit monitoring and feedforward pin the contraction order by tying teleportation byproducts to fixed fusion steps. 
As a result, the permutation sum 
evades the standard determinantal or 
Pfaffian collapse. Instead, each post-selected branch amplitude $\mathcal A_d(\boldsymbol c,\boldsymbol\beta)=\langle r|\mathcal T_{\boldsymbol\beta}(\mathbf S)|\ell\rangle$
is a matrix element of an 
ordered non-commutative trace polynomial.
This route is complementary to fermionic advantage schemes that inject non-Gaussian resources such as magic input states~\cite{Oszmaniec2022Fermion}: there the resource is supplied by state injection, whereas here the non-commutative branch structure is induced by monitoring, feedforward, and fixed-order fusion.
On locally connected architectures, the required feedforward routing into the fixed Bell-fusion geometry can be compiled into a swap network, while platforms featuring long-range gates available in trapped ions~\cite{Wright2019Benchmarking, Postler2022Demonstration} or dynamic qubit rearrangement as being realized by neutral-atom tweezer arrays~\cite{Bluvstein2022A, Evered2023High, Bluvstein2024Logical} can implement the required connectivity with minimal overhead.

Beyond the numerical MPO-bond diagnostics and Porter-Thomas anti-concentration, the SM provides structural context:
it relates $\mathcal{T}_{\boldsymbol\beta}(\mathbf S)$ to canonical objects in non-commutative algebraic complexity
(e.g., Cayley’s row-ordered determinant~\cite{Arvind2010On,Chien2011Almost,Gentry2014Noncommutative}) and records worst-case exact-value benchmarks for related trace-type observables. In particular, certain cyclic-closure observables in this architecture reduce to $\#\mathrm P$-hard quantities (SM).

A natural next question is the depth required for anti-concentration. Related fermion-sampling analyses assume linear-depth random FLO layouts to prove anti-concentration, yet numerics suggest that active (parity-preserving) FLO circuits may already anti-concentrate at much smaller depth, plausibly even logarithmic depth~\cite{Oszmaniec2022Fermion}; it would be interesting to test whether monitored-FLO branches exhibit analogous shallow-depth behavior in architectures with constrained connectivity and measurement scheduling. 

\smallskip
\paragraph*{Conclusion.}
We have proposed a monitored-FLO sampling architecture combining mid-circuit number monitoring, adaptive feedforward, and fixed-order Bell fusion. In the dilute regime,  collision-free post-selection  isolates an $n\times n$  block sub-matrix $\mathbf S$,  and the measurement record enforces an outcome-dependent contraction order that yields non-commutative trace-polynomial branch operators $\mathcal{T}_{\boldsymbol\beta}(\mathbf S)$, whose matrix elements give the branch amplitudes. Numerics show Haar-like anti-concentration and rapid growth of the order-respecting MPO bond dimension, supporting measurement-induced non-commutativity as a feasible route to sampling complexity in free-fermion dynamics. More broadly speaking, this work identifies mid-circuit measurement and feedforward as a mechanism that can fundamentally alter the algebraic structure of free-fermion interference and that can induce computationally powerful resources into an otherwise classically simulable quantum circuit.

\smallskip
\paragraph{Acknowledgments.}
\begin{acknowledgments}
We thank Antonio Anna Mele, Lorenzo Leone, and Nathan Walk for discussions. This work has been supported by the BMFTR (RealistiQ, QSolid, MuniQC-Atoms, QuSol, Hybrid++, FermiQP, PasQuops), the Munich Quantum Valley (K-8), the Quantum Flagship (Millenion, PasQuans2), Berlin Quantum, the European Research Council (DebuQC), the Clusters of Excellence MATH+ and ML4Q, the DFG (CRC 183 and SPP 2514), and the Alexander von Humboldt Foundation.
\end{acknowledgments}

\bibliography{bibliography}

\clearpage

\twocolumngrid
\appendix

\begin{center}
	\textbf{\large End Matter}
\end{center}

\makeatletter

\subsection*{Local encode/decode for qubits}
For $d=2$, the auxiliary register $\widetilde Q_j$ is a single flag qubit $f_j$.
The two-qubit block $(Q_j,f_j)$ has dimension $4$, which contains the local
vacuum-plus-single-particle sector (dimension $3$) plus one extra basis state
treated as outside the code space (a local collision state).
We identify the vacuum and single-particle codewords by
\begin{equation}
|{\rm vac}\rangle_j \leftrightarrow |0,0\rangle_{Q_j ,f_j},\quad
a^\dagger_{j,\alpha}|{\rm vac}\rangle_j \leftrightarrow |1-\alpha,\alpha\rangle_{Q_j, f_j},
\end{equation}
with \(\alpha\in\{0,1\}\), and use the remaining basis state $|1,1\rangle_{Q_j ,f_j}$ as an ``out-of-code'' collision marker. Since the total fermion number is fixed to $n$, requiring $n$ sites with $f_j=1$ also rules out any residual multi-occupancy.

A convenient constant-depth choice (Fig.~\ref{fig:encdec_d2}) is to apply
$\mathrm{CNOT}_{Q_j\to f_j}$ followed by $X_{Q_j}$, i.e., 
$\mathcal E_j:=X_{Q_j}\mathrm{CNOT}_{Q_j\to f_j}$.
For decoding we apply $\mathrm{CNOT}_{Q_j\to f_j}$ and then measure the flag qubit $f_j$ in the computational basis; we write this as
$\mathcal D_j:=\mathrm{CNOT}_{Q_j\to f_j}$ followed by a computational-basis measurement of $f_j$.
On the single-particle sector, $\mathcal D_j$ maps any valid codeword to $f_j=1$ while preserving the (unresolved) payload
state on $Q_j$ up to a fixed local basis swap, e.g.,
\begin{equation}
\label{eq:endmatter-d2-preserve}
\mathcal D_j\left(c_0|1,0\rangle+c_1|0,1\rangle\right)
=\left(c_0|1\rangle+c_1|0\rangle\right)_{Q_j}\otimes|1\rangle_{f_j}.
\end{equation}
Thus measuring $f_j$ reveals only whether the block lies in the decoded single-particle sector, without resolving
the internal qubit label.

More generally, for arbitrary $d$ the coarse-grained monitoring corresponds to the projector 
\begin{equation} \Pi^{(j)}_{1}=\sum_{\alpha=0}^{d-1} n_{j,\alpha}\prod_{\nu\neq\alpha}\bigl(1-n_{j,\nu}\bigr) \end{equation} 
onto the single-occupation sector.
This operator is diagonal in the occupation basis and acts as the identity on the internal degree of freedom within the single-particle subspace. It lies outside fermionic Gaussian (FLO/matchgate) measurement primitives that resolve individual mode occupations. For $d=2$ this reduces to 
\begin{equation} \Pi^{(j)}_{1}=n_{j,0}(1-n_{j,1})+n_{j,1}(1-n_{j,0}), \end{equation} 
which confirms that conditioning on $f_j=1$ locally annihilates multi-particle states while preserving the quantum coherence of the internal state. 

In the monitored protocol, we set $c_j:=f_j$.
Post-selecting on outcomes with exactly $n$ sites satisfying $c_j=1$ selects $n$ singly occupied blocks; since the
total fermion number is $n$, this excludes collisions.

\paragraph*{Hardware remark.}
For $d=2$, both $\mathcal E_j$ and $\mathcal D_j$ are local constant-depth two-qubit gadgets—one CNOT plus at most a single-qubit $X$ and a single-qubit measurement—native to 2D nearest-neighbor architectures when $(Q_j,f_j)$ are adjacent.

For $d>2$, the logical qudits $Q_j\simeq\mathbb{C}^d$ can be implemented by activating higher levels of intrinsically multilevel devices, while keeping the flag/readout and routing layers qubit-based. Such mixed-dimensional control is routinely explored in platforms such as superconducting transmon circuits and trapped ions~\cite{Morvan2021Qutrit,Goss2022High,Hrmo2023Native}.

\begin{figure}[tbp]
\centering
\includegraphics[width=0.7\columnwidth]{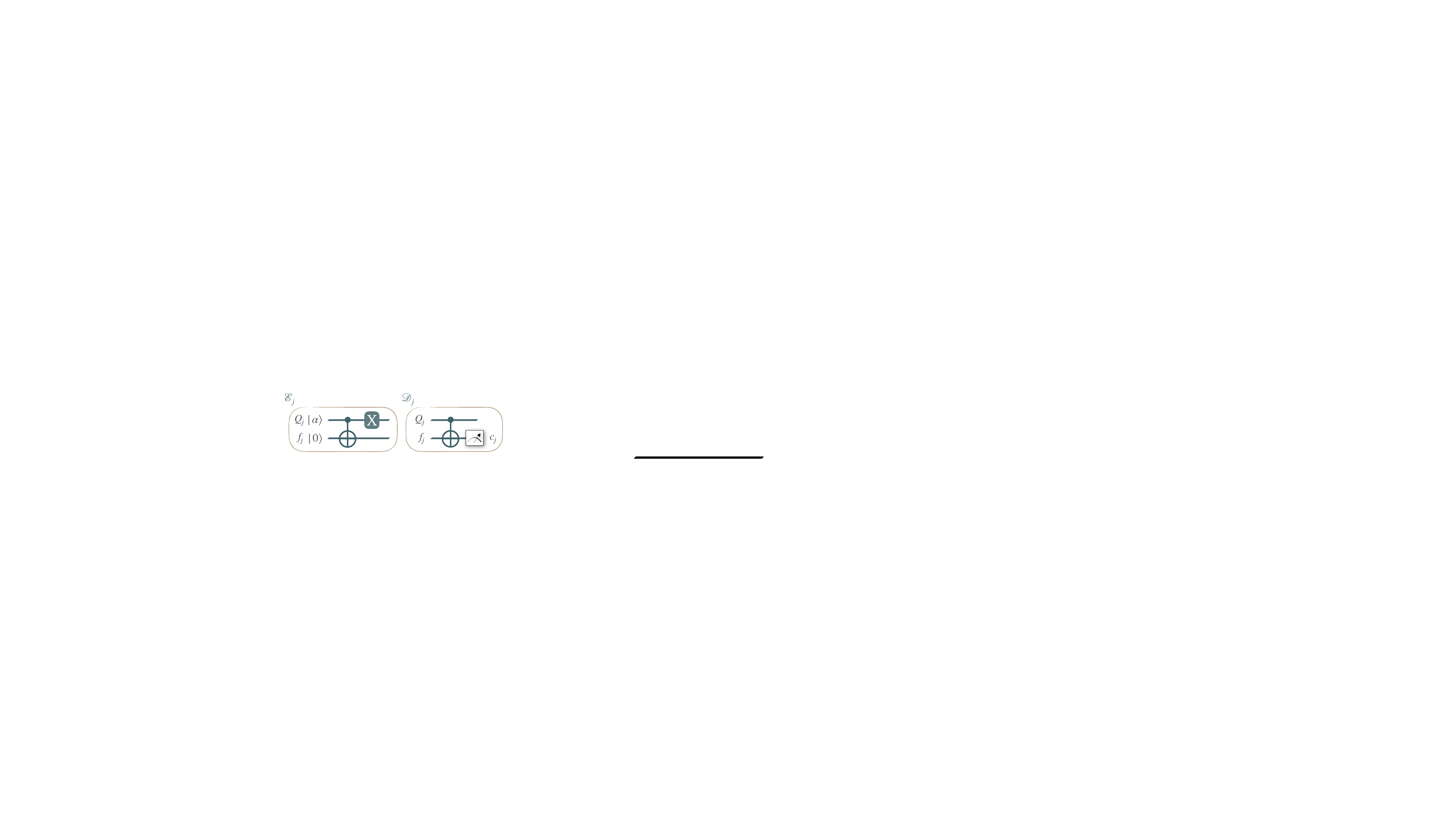}
\caption{
Local $d=2$ encode/decode gadgets on the pair $(Q_j,f_j)$.
Left: the encoder $\mathcal E_j$ maps $|\alpha\rangle_{Q_j}|0\rangle_{f_j}$ to the one-hot codewords
$|1,0\rangle$ ($\alpha=0$) and $|0,1\rangle$ ($\alpha=1$).
Right: the decoder $\mathcal D_j$ applies a local two-qubit gate and measures $f_j$, producing the
coarse-grained occupation flag $c_j$ while leaving the internal state on $Q_j$ unresolved.
}
\label{fig:encdec_d2}
\end{figure}

\subsection*{Branch polynomials from the fusion geometry}

We illustrate explicitly how the fusion geometry converts the permutation sum in $\det_{\otimes}(\mathbf S)$ into an ordered
non-commutative trace polynomial with loop (trace) factors.

Consider a post-selected instance with $n$ selected blocks and block sub-matrix
$\mathbf S=(S_{t,k})_{t,k=1}^n$, and fix fusion outcomes
${\boldsymbol\beta}=(\beta_1,\dots,\beta_n)$.
Assume the fixed-order Bell-fusion geometry pairs $(A_{t-1},Q_t)$ for $t=1,\dots,n$
(after the feedforward routing), with boundary input on $A_0$ and boundary output on $A_n$.
We absorb the byproducts into outcome-dependent \emph{dressed blocks}
\begin{equation}
\label{eq:endmatter-dressed}
\widetilde S_{t,k}\ :=\ S_{t,k}^{\top}\hat D_{\beta_t}.
\end{equation}

Throughout this subsection we adopt the same unnormalized Kraus/Born-rule convention as in the main text:
the operators $\mathcal{T}_{\boldsymbol\beta}(\mathbf S)$ (and the permutation contributions $\mathcal{T}_\sigma$ below) are the
post-selected contraction operators entering
$\mathcal A_d(\boldsymbol c,{\boldsymbol\beta})=\langle r|\mathcal{T}_{\boldsymbol\beta}(\mathbf S)|\ell\rangle$ and
$p(\boldsymbol c,{\boldsymbol\beta})=|\mathcal A_d(\boldsymbol c,{\boldsymbol\beta})|^2$,
with the fixed boundary states $|\ell\rangle,|r\rangle$ left implicit as in step~(v) of the protocol.
We do not renormalize 
intermediate post-selected states.
With our normalized Bell-state convention, each Bell-measurement Kraus branch contributes the 
fixed scalar factor
$d^{-1}$; hence $\mathcal{T}_{\boldsymbol\beta}(\mathbf S)$ carries an overall factor $d^{-n}$
(which cancels in the conditional distribution $p(\boldsymbol\beta\,|\,\boldsymbol c)$).
Any instance-dependent scale of the FLO kernel is already encoded in the blocks $S_{t,k}$ (hence in
$\widetilde S_{t,k}$) and is not stripped off.

As a warm-up, for $n=2$ the two permutation terms evaluate to
\begin{equation}
    \mathcal{T}_{(1,2)}=\frac{1}{d^2}\widetilde S_{2,2}\widetilde S_{1,1}
\end{equation}
and
\begin{equation}
    \mathcal{T}_{(2,1)}=\frac{1}{d^2}\Tr(\widetilde S_{2,1})\,\widetilde S_{1,2},
\end{equation}
so that 
\begin{equation}
    \mathcal{T}_{\boldsymbol\beta}(\mathbf S)=\sum_{\sigma\in \mathfrak{S}_2}\sgn(\sigma)\mathcal{T}_\sigma=\mathcal{T}_{(1,2)}-\mathcal{T}_{(2,1)}.
\end{equation}
Here $\sigma=(1,2)$ yields the open-path product $\widetilde S_{2,2}\widetilde S_{1,1}$, while $\sigma=(2,1)$ closes a 1-cycle at $A_1$ and produces the loop factor $\Tr(\widetilde S_{2,1})$.

For $n=3$ there are $3!=6$ permutation terms in
\begin{equation}
\det_{\otimes}(\mathbf S)=\sum_{\sigma\in \mathfrak{S}_3}\sgn(\sigma)
\Big(\bigotimes_{t=1}^3 S_{t,\sigma(t)}\Big)\mathcal P_{\sigma^{-1}} .
\end{equation}
Fix $\sigma\in \mathfrak{S}_3$.
The permutation tensor $\mathcal P_{\sigma^{-1}}$ reroutes the auxiliary-payload pairing so that
the output system $Q_t$ is paired with auxiliary system $A_{\sigma(t)}$.
Under the fixed-order fusion projections on $(A_{t-1},Q_t)$, this induces a directed graph on
$\{A_0,A_1,A_2,A_3\}$ with edges
\begin{equation}
A_{t-1}\xrightarrow{\ \widetilde S_{t,\sigma(t)}\ } A_{\sigma(t)},
\qquad t=1,2,3.
\end{equation}
Since $\sigma$ is a permutation of $\{1,2,3\}$, each $A_j$ with $j\in\{1,2,3\}$ has indegree one,
while $A_0$ has indegree zero and $A_3$ has outdegree zero. Hence the wiring decomposes uniquely into
(i) one directed path from $A_0$ to $A_3$ and (ii) disjoint directed cycles. Operationally, the MPO bond index tracks the open auxiliary system wiring across a cut; each fusion step either merges strands or closes a directed cycle (yielding a scalar trace factor). This is equivalent to an ordered algebraic branching program viewpoint~\cite{Nisan1991Lower}.

Let $\mathrm{Cyc}(\sigma)$ denote the set of directed cycles in the wiring graph induced by $\sigma$, and let $\mathrm{Path}(\sigma)$ be the unique open path from $A_0$ to $A_n$ (here $n=3$).
Evaluating the Bell projections in the fixed fusion order yields the \emph{path--cycle} form
\begin{equation}
\label{eq:endmatter-pathcycle-eval}
\mathcal{T}_\sigma
=\frac{1}{d^3}\Bigl(\prod_{C\in\mathrm{Cyc}(\sigma)}\Tr(\Pi_C)\Bigr)\Pi_{\mathrm{Path}(\sigma)}.
\end{equation}
Here each edge $A_{t-1}\to A_{\sigma(t)}$ carries label $\widetilde S_{t,\sigma(t)}$, so $\Pi_{\mathrm{Path}(\sigma)}$ (resp.\ $\Pi_C$) is the ordered product of labels along $\mathrm{Path}(\sigma)$ (resp.\ around $C$) in the order of map composition (later fusion steps multiply on the left). The prefactor $d^{-3}$ is the product of the three Bell-projection factors. For general $n$, the same argument gives $\mathcal T_\sigma=d^{-n}\big(\prod_{C}\Tr(\Pi_C)\big)\Pi_{\mathrm{Path}}$ with $A_3$ replaced by $A_n$.

Fig.~\ref{fig:APloop} illustrates two representative cases for $n=3$:
panel (a) shows $\sigma=(2,1,3)$, which produces a self-loop (1-cycle) at $A_1$ and hence a factor
${\Tr}(\widetilde S_{2,1})$; panel (b) shows $\sigma=(3,2,1)$, which produces a 2-cycle
$A_1\leftrightarrow A_2$ and hence ${\Tr}(\widetilde S_{3,1}\widetilde S_{2,2})$.

\paragraph*{All six $n=3$ contributions.}
Applying \eqref{eq:endmatter-pathcycle-eval} to each $\sigma\in \mathfrak{S}_3$ yields
\begin{align}
\label{eq:endmatter-n3-terms}
\mathcal{T}_{(1,2,3)}\ &=\ \frac{1}{d^3}\widetilde S_{3,3}\widetilde S_{2,2}\widetilde S_{1,1},\nonumber\\
\mathcal{T}_{(2,1,3)}\ &=\ \frac{1}{d^3}\Tr(\widetilde S_{2,1})\widetilde S_{3,3}\widetilde S_{1,2},\nonumber\\
\mathcal{T}_{(1,3,2)}\ &=\ \frac{1}{d^3}\Tr(\widetilde S_{3,2})\widetilde S_{2,3}\widetilde S_{1,1},\nonumber\\
\mathcal{T}_{(3,2,1)}\ &=\ \frac{1}{d^3}\Tr(\widetilde S_{3,1}\widetilde S_{2,2})\widetilde S_{1,3},\nonumber\\
\mathcal{T}_{(2,3,1)}\ &=\ \frac{1}{d^3}\widetilde S_{2,3}\,\widetilde S_{3,1}\,\widetilde S_{1,2},\nonumber\\
\mathcal{T}_{(3,1,2)}\ &=\ \frac{1}{d^3}\Tr(\widetilde S_{2,1})\,\Tr(\widetilde S_{3,2})\,\widetilde S_{1,3}.
\end{align}
The branch operator is the signed sum
\begin{equation}
\begin{aligned}
    \mathcal{T}_{{\boldsymbol\beta}}(\mathbf S)=&\sum_{\sigma\in \mathfrak{S}_3}\sgn(\sigma)\mathcal{T}_\sigma \\=&\ \mathcal{T}_{(1,2,3)}-\mathcal{T}_{(2,1,3)}-\mathcal{T}_{(1,3,2)}-\mathcal{T}_{(3,2,1)}
\\&+\mathcal{T}_{(2,3,1)}+\mathcal{T}_{(3,1,2)}.
\end{aligned}
\end{equation}

\begin{figure}[hb]
\centering
\includegraphics[width=0.80\columnwidth]{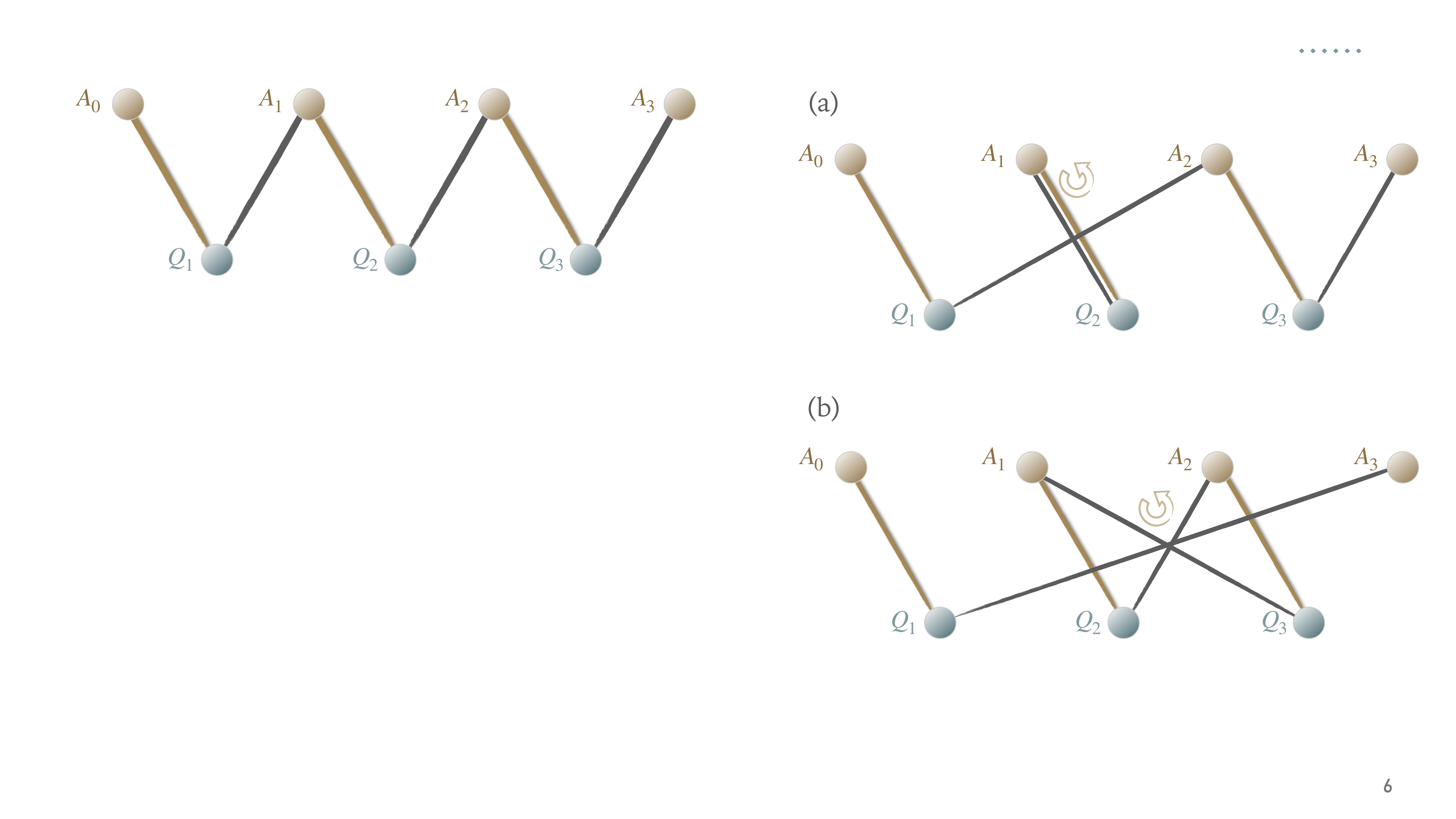}
\caption{
Examples of the path--cycle decomposition induced by a single permutation term in
$\det_{\otimes}(\mathbf S)$ for $n=3$ under the fixed-order Bell-fusion geometry.
Gold edges indicate the fixed Bell projections between $(A_{t-1},Q_t)$, while black edges indicate the
auxiliary-payload pairing induced by the permutation tensor $\mathcal P_{\sigma^{-1}}$ in the $\sigma$-term,
i.e., $Q_t$ is paired with $A_{\sigma(t)}$.
Composing gold and black wiring yields directed edges $A_{t-1}\to A_{\sigma(t)}$ labeled by the dressed blocks
$\widetilde S_{t,\sigma(t)}=S_{t,\sigma(t)}^{\top}\hat D_{\beta_t}$.
(a) $\sigma=(2,1,3)$ (transposition $(12)$): the induced graph contains a self-loop (1-cycle) at $A_1$,
yielding the loop factor ${\Tr}(\widetilde S_{2,1})$, and an open path $A_0\to A_2\to A_3$ contributing the ordered monomial
$\widetilde S_{3,3}\widetilde S_{1,2}$.
(b) $\sigma=(3,2,1)$ (transposition $(13)$): the induced graph contains a 2-cycle $A_1\leftrightarrow A_2$,
yielding ${\Tr}(\widetilde S_{3,1}\widetilde S_{2,2})$, and the open path $A_0\to A_3$ contributing $\widetilde S_{1,3}$.
}
\label{fig:APloop}
\end{figure}

\clearpage

\onecolumngrid
\appendix

\clearpage
\begin{center}
  \textbf{\large Supplemental Material for ``Measurement-induced non-commutativity in adaptive fermionic linear optics"}
\end{center}

\setcounter{equation}{0}
\setcounter{figure}{0}
\setcounter{table}{0}
\setcounter{page}{1}

\makeatletter
\renewcommand{\thetable}{S\arabic{table}}
\renewcommand{\thefigure}{S\arabic{figure}}
\renewcommand{\theequation}{S\arabic{equation}}
\makeatother

\section*{Appendix A: Proof of Lemma~\ref{lem:post-selection-rate} (Constant-rate collision-free monitoring)}
\begin{proof}[Proof of Lemma~\ref{lem:post-selection-rate}]
We interpret $p_{\rm e}(n,m,d)$ as the probability that Step~(iii) produces a
collision-free record with exactly $n$ effective blocks, where the probability
is taken jointly over the random choice of the single-particle propagator
$\mathbf V$ and the Born rule. By Haar invariance, the averaged collision-free probability depends only on the dimension of the monitored subspace and is independent of the particular $n$-fermion input state (pure or mixed) supported on the encoded sector. 
For concreteness we compute it for a fixed basis occupation pattern $I$.

Let the $dm$ fermionic modes be indexed by pairs $(j,\alpha)$ with
$j\in\{1,\dots,m\}$ and $\alpha\in\{0,\dots,d-1\}$. Fix an $n$-fermion mode-occupation pattern $I\subseteq\{1,2,\dots,dm\}$ with $|I|=n$, obtained by encoding the occupied input blocks $\mathcal I_{\rm in}$ (so $I$ consists of one internal mode per block in $\mathcal I_{\rm in}$).
For any $n$-subset $J\subseteq\{1,2,\dots,dm\}$, let $p_J(\mathbf V)$ denote the output
probability of observing $J$ in the occupation-number measurement, i.e., 
\begin{equation}
p_J(\mathbf V)=\bigl|\det(\mathbf V_{J,I})\bigr|^2,
\end{equation}
where $\mathbf V_{J,I}$ is the $n\times n$ sub-matrix of $\mathbf V$ with rows
indexed by $J$ and columns indexed by $I$.

Assume $\mathbf V$ is Haar-random in $\mathrm U(dm)$ and retain the $n$ columns corresponding to the occupied input modes; these columns form a uniformly random orthonormal $n$-frame in $\mathbb C^{dm}$. By left-invariance of the
Haar measure, for any permutation matrix $P$ acting on the output modes,
$\mathbf V$ and $P\mathbf V$ are identically distributed. Since $P$ acts only
by permuting output rows, we have $p_{PJ}(\mathbf V)=p_J(P^{-1}\mathbf V)$, and
hence
\begin{equation}
\mathbb E_{\mathbf V}\bigl[p_{PJ}(\mathbf V)\bigr]
=\mathbb E_{\mathbf V}\bigl[p_J(\mathbf V)\bigr].
\end{equation}
Because the permutation group acts transitively on $n$-subsets $J\subseteq\{1,2,\dots,dm\}$,
it follows that $\mathbb E_{\mathbf V}[p_J(\mathbf V)]$ is the same for all
$J$ with $|J|=n$. Using $\sum_{|J|=n} p_J(\mathbf V)=1$ for every $\mathbf V$,
we conclude
\begin{equation}
\mathbb E_{\mathbf V}\bigl[p_J(\mathbf V)\bigr]=\frac{1}{\binom{dm}{n}}
\quad \text{for all } J\subseteq\{1,2,\dots,dm\},\ |J|=n.
\end{equation}

We now relate the coarse-grained collision-free record $\boldsymbol c$ used in the protocol
to fine-grained mode outcomes $J\subseteq\{1,\dots,dm\}$.
Post-selecting on ``collision-free'' means that exactly one mode is occupied in each of $n$ distinct spatial blocks,
without resolving the internal label $\alpha$; equivalently, it projects onto the direct sum of Fock basis outcomes
with one occupied mode per selected block. Since these Fock outcomes are orthogonal, the Born probability of
a collision-free record is the sum of the corresponding fine-grained probabilities $p_J(\mathbf V)$.
Let $\mathcal{J}_{\rm cf}$ be the set of collision-free mode patterns, i.e.,
those $J$ that occupy $n$ distinct spatial blocks (at most one mode per block).
Counting gives
\begin{equation}
|\mathcal{J}_{\rm cf}|=\binom{m}{n} d^n,
\end{equation}
since one chooses the $n$ occupied blocks and then one of the $d$ internal
modes in each occupied block. Therefore
\begin{equation}
p_{\rm e}(n,m,d)
=\mathbb E_{\mathbf V}\Bigl[\sum_{J\in\mathcal{J}_{\rm cf}} p_J(\mathbf V)\Bigr]
=\frac{\binom{m}{n}d^n}{\binom{dm}{n}}
= d^n \frac{m!(dm-n)!}{(m-n)!(dm)!} 
=\prod_{i=0}^{n-1}\frac{1-\frac{i}{m}}{1-\frac{i}{dm}}.
\label{eq:pe-comb}
\end{equation}

\begin{figure}[tbh]
\centering
\includegraphics[width=0.78\columnwidth]{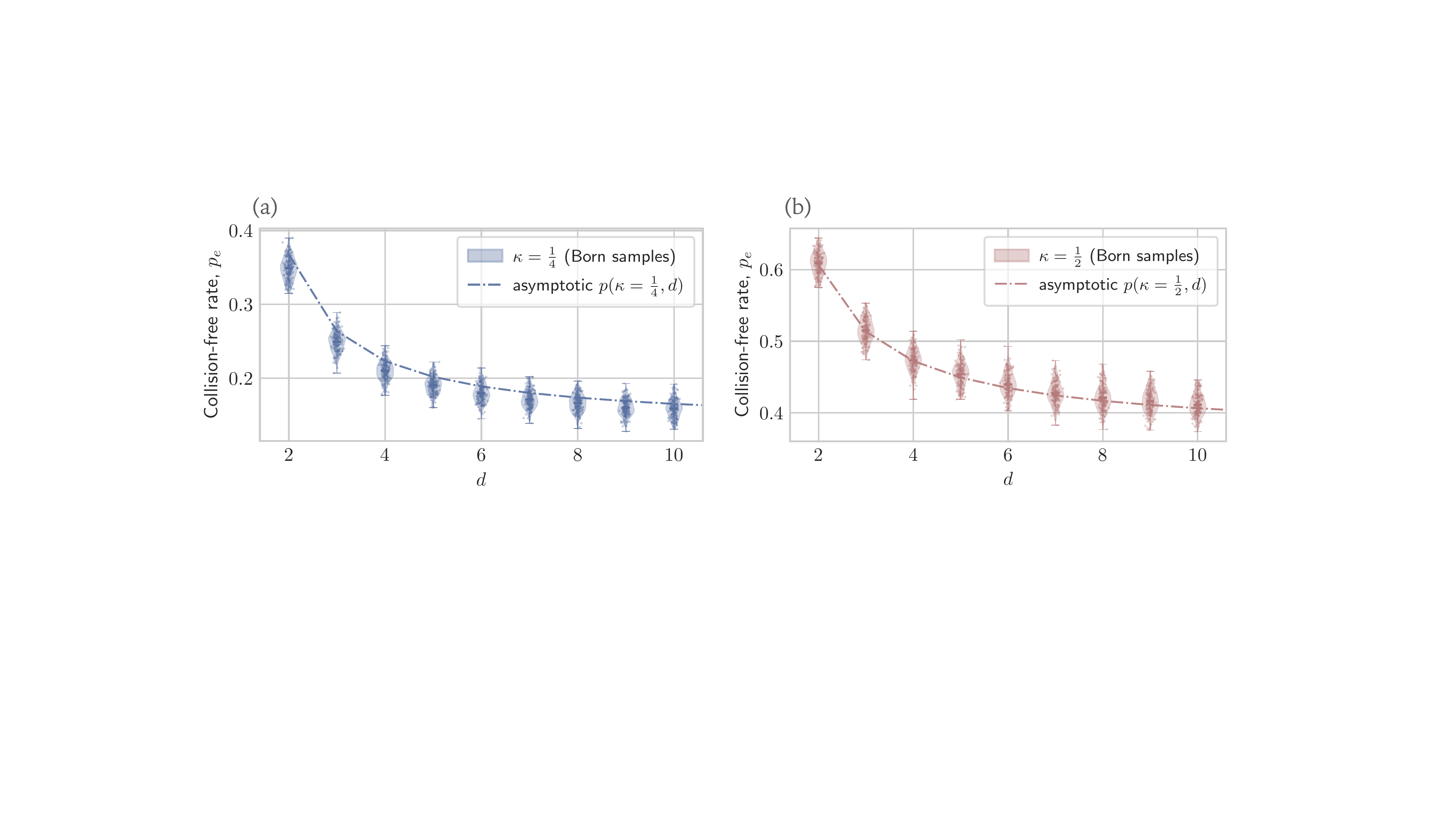}
\caption{Numerical verification of the constant-rate collision-free monitoring probability.
For fixed dilute scaling $m=\kappa n^2$ (here $n = 20$ and $m= \kappa n^2$ is used in the simulations),
we estimate the collision-free rate $p_{\rm e}(n,m,d)$ by sampling collision-free monitoring records from the Born rule induced by Haar-random
single-particle propagators $\mathbf V\in \mathrm U(dm)$ (see text).
Violin plots show the per-instance distribution of empirical rates obtained from repeated Born samples for each independent $\mathbf V$ draw;
horizontal bars indicate medians.
Panels show two dilute scalings: (a) $\kappa=1/4$ (blue) and (b) $\kappa=1/2$ (red).
Dash-dotted curves show the asymptotic prediction $p_\infty(\kappa,d)=\exp\big[-(1-1/d)/(2\kappa)\big]$ from Eq.~\eqref{eq:pe-pinf} for the corresponding $\kappa$.
}
\label{fig:pe_vs_d}
\end{figure}

Now set $m=\kappa n^2$. Taking logarithms and using
$\log(1-x)=-x+\mathcal O(x^2)$ uniformly for $x=o(1)$ yields
\begin{align}
\log p_{\rm e}(n,\kappa n^2,d)
&=\sum_{i=0}^{n-1}\Bigl[\log \Bigl(1-\frac{i}{m}\Bigr)
-\log \Bigl(1-\frac{i}{dm}\Bigr)\Bigr] \nonumber\\
&= -\Bigl(1-\frac1d\Bigr)\frac{1}{m}\sum_{i=0}^{n-1} i
\ +\mathcal O \Bigl(\frac{1}{m^2}\sum_{i=0}^{n-1} i^2\Bigr) \nonumber\\
&= -\frac{d-1}{2d\kappa}\ +\mathcal O \Bigl(\frac{1}{n}\Bigr).
\end{align}
Hence, for $m=\kappa n^2$,
\begin{equation}
\label{eq:pe-limit}
p_{\rm e}(n,\kappa n^2,d)
=\exp\Bigl(-\frac{d-1}{2d\kappa}\Bigr)\bigl(1+\mathcal O(n^{-1})\bigr),
\end{equation}
and therefore
\begin{equation}
\label{eq:pe-pinf}
p_\infty(\kappa,d):=\lim_{n\to\infty}p_{\rm e}(n,\kappa n^2,d)
=\exp\Bigl(-\frac{1-\frac{1}{d}}{2\kappa}\Bigr).
\end{equation}
In particular, there exists a constant $c_{\kappa,d}>0$ independent of $n$ such that
$p_{\rm e}(n,\kappa n^2,d)\ge c_{\kappa,d}$ for all $n$.
\end{proof}

\noindent\emph{Numerical check.}
Fig.~\ref{fig:pe_vs_d} compares the empirical collision-free monitoring probability $p_{\rm e}(n,\kappa n^2,d)$ against the asymptotic constant
$p_\infty(\kappa,d)$ predicted by Eq.~\eqref{eq:pe-pinf} across several local dimensions $d$ at fixed $\kappa$ (with $m=\lceil \kappa n^2\rceil$).
In the main text we focus on the experimentally most relevant qubit and qutrit cases ($d=2,3$);
here we extend the numerical check to larger $d$ to validate the predicted $d$-dependence of the collision-free rate.
The observed agreement with $p_\infty(\kappa,d)$ and the narrowing spread across instances are consistent with Lemma~\ref{lem:post-selection-rate}.

\section*{Appendix B: Proof of Lemma~\ref{lem:ftpd} (FLO induces the determinantal kernel)}
\begin{proof}[Proof of Lemma~\ref{lem:ftpd}]
Let the $n$ occupied input blocks be $i_k=m-n+k$ for $k=1,\dots,n$ and let the collision-free monitoring record $\boldsymbol c$ select output
blocks $\ell_1<\cdots<\ell_n$ (i.e., $c_{\ell_j}=1$). Recall that the
single-particle propagator $\mathbf V\in \mathrm U(dm)$ is viewed as an $m\times m$
block matrix $\mathbf V=(V_{j,k})$ with $V_{j,k}\in\M_d(\C)$, and the
post-selected block sub-matrix is $\mathbf S=(S_{t,k})_{t,k=1}^n$ with
$S_{t,k}:=V_{\ell_t, i_k}\in \M_d(\C)$.

On the vacuum-plus-single-particle sector, the local encode/decode maps provide
an identification between the internal qudit basis and the creation operators:
for each block index $q$ and internal label $\alpha\in\{0,\dots,d-1\}$,
\begin{equation}
|\alpha\rangle_{Q_q}\ \longleftrightarrow\ a^\dagger_{q,\alpha} |{\rm vac}\rangle .
\end{equation}
Thus an arbitrary input internal state vector
$|\psi\rangle=\sum_{\boldsymbol\alpha}\psi_{\boldsymbol\alpha} 
|\alpha_1,\cdots,\alpha_n\rangle\in(\C^d)^{\otimes n}$
is mapped by the encoding to the $n$-fermion state vector
\begin{equation}
|\Psi_{\rm in}\rangle
=\sum_{\boldsymbol\alpha}\psi_{\boldsymbol\alpha} 
a^\dagger_{i_1,\alpha_1}\cdots a^\dagger_{i_n,\alpha_n} |{\rm vac}\rangle
\ \in\ \wedge^n(\C^{dm}).
\end{equation}
Let $U(\mathbf V)$ be the second-quantized FLO unitary with single-particle
propagator $\mathbf V$. It acts on creation operators as
\begin{equation}
\label{eq:2ndq}
U(\mathbf V) a^\dagger_{k,\alpha} U(\mathbf V)^\dagger
=\sum_{j=1}^m\sum_{\nu=0}^{d-1}(V_{j,k})_{\nu,\alpha} a^\dagger_{j,\nu}.
\end{equation}
Applying \eqref{eq:2ndq} to each factor and expanding, we obtain
\begin{align}
U(\mathbf V)|\Psi_{\rm in}\rangle
=&\sum_{\boldsymbol\alpha}\psi_{\boldsymbol\alpha} 
\prod_{k=1}^n\Bigl(\sum_{j_k,\nu_k}(V_{j_k,i_k})_{\nu_k,\alpha_k} 
a^\dagger_{j_k,\nu_k}\Bigr) |{\rm vac}\rangle \nonumber\\
=&\sum_{\boldsymbol\alpha}\ \sum_{\boldsymbol j}\ \sum_{\boldsymbol\nu}
\psi_{\boldsymbol\alpha} 
\Bigl(\prod_{k=1}^n (V_{j_k,i_k})_{\nu_k,\alpha_k}\Bigr) 
a^\dagger_{j_1,\nu_1} \cdots a^\dagger_{j_{n-1},\nu_{n-1}} a^\dagger_{j_n,\nu_n} |{\rm vac}\rangle .
\label{eq:expand}
\end{align}

Conditioning on the collision-free output sector specified by $\boldsymbol c$ restricts the state to the subspace with exactly one fermion
in each output block $\ell_1,\dots,\ell_n$ and vacuum elsewhere. In the sum
\eqref{eq:expand}, this post-selection eliminates every term except those with
$\{j_1,\dots,j_n\}=\{\ell_1,\dots,\ell_n\}$ (as a set), i.e., those for which
there exists a unique permutation $\sigma\in \mathfrak{S}_n$ such that
\begin{equation}
j_{\sigma(t)}=\ell_t,\qquad t=1,\dots,n.
\end{equation}
For such a term, fermionic anticommutation implies that reordering the creation
operators into the canonical output order $\ell_1<\cdots<\ell_n$ contributes a
sign $\sgn(\sigma)$:
\begin{equation}
a^\dagger_{j_1,\nu_1}\cdots a^\dagger_{j_n,\nu_n}
=\sgn(\sigma)
a^\dagger_{\ell_1,\nu_{\sigma(1)}}\cdots a^\dagger_{\ell_n,\nu_{\sigma(n)}}.
\end{equation}
Substituting $j_{\sigma(t)}=\ell_t$ into the coefficient in \eqref{eq:expand}
yields $\prod_{t=1}^n (V_{\ell_t,i_{\sigma(t)}})_{\nu_{\sigma(t)},\alpha_{\sigma(t)}}$,
i.e., the corresponding block entries of $\mathbf S$.

Finally, applying the local decoding maps on the post-selected single-particle
sector identifies
$a^\dagger_{\ell_t,\nu} |{\rm vac}\rangle\mapsto|\nu\rangle_{Q_{\ell_t}}$
(and writes the occupation flags). Therefore, conditioned on $\boldsymbol c$,
the induced map on the internal $n$-qudit space $(\C^d)^{\otimes n}$ is
\begin{equation}
|\psi\rangle\ \longmapsto\
\sum_{\sigma\in \mathfrak{S}_n}\sgn(\sigma)
\Bigl(\bigotimes_{t=1}^n S_{t,\sigma(t)}\Bigr)\mathcal P_{\sigma^{-1}}|\psi\rangle,
\end{equation}
where we use the convention that $\mathcal P_\sigma$ permutes tensor factors as
\begin{equation}
\label{eq:perm-op-conv}
\mathcal P_\sigma|v_1\otimes\cdots\otimes v_n\rangle
=|v_{\sigma^{-1}(1)}\otimes\cdots\otimes v_{\sigma^{-1}(n)}\rangle.
\end{equation}
With this convention, $\mathcal P_{\sigma^{-1}}$ routes the input tensor factor
carried by block $i_{\sigma(t)}$ to output position $t$. This is precisely the
operator $\det_{\otimes}(\mathbf S)$ defined in \eqref{eq:ftpd-maintext}.

\end{proof}

\paragraph*{Implementation remark (compilation of number-conserving FLO).}
A number-conserving FLO unitary $U(\mathbf V)$ is the second-quantized lift of a single-particle transformation
$\mathbf V\in \mathrm U(dm)$ acting on the $dm$ fermionic orbitals. Such unitaries are generated by quadratic, number-conserving
Hamiltonians and can be compiled into two-mode number-conserving gates (fermionic beam splitters / Givens rotations)
together with single-mode phases. Concretely, any $\mathbf V\in \mathrm U(dm)$ admits a decomposition into $\mathcal O(d^2m^2)$ two-mode
Givens rotations, analogous to standard decompositions in linear optics. Under the Jordan--Wigner mapping, each such
two-mode operation corresponds to a matchgate, hence implementable by constant-depth two-qubit primitives on neighboring
qubits after routing. With all-to-all connectivity the resulting circuit depth can be parallelized to $\mathcal O(dm)$, while on local
architectures one can realize the required pairings via a swap network with polynomial overhead.
We refer to Ref.~\cite{Oszmaniec2022Fermion} for explicit constructions and gate counts.

\section*{Appendix C: Bell-projection identity and proof of Proposition~\ref{prop:nc-amp}}
\label{app:nc-amp-proof}

\paragraph*{Convention: Weyl operators and Bell-fusion byproducts.}
We fix the computational basis $\{|k\rangle\}_{k=0}^{d-1}$ and set $\zeta:=e^{2\pi i/d}$.
The standard Weyl (generalized Pauli) generators are
\begin{equation}
X|k\rangle=|k+1\bmod d\rangle,\qquad Z|k\rangle=\zeta^k|k\rangle,
\end{equation}
so that $ZX=\zeta XZ$.
For $\beta=(a,b)\in\Z_d^2$ we use $\hat D_{\beta}:=\hat D_{(a,b)}:=X^{a}Z^{b}$.

\paragraph*{Bell-projection identity.}
We recall the maximally entangled state vector
\begin{equation}
\label{eq:Phi-plus-sm}
|\Phi_d^+\rangle := \frac{1}{\sqrt d}\sum_{k=0}^{d-1}|k,k\rangle
\end{equation}
and define the generalized Bell basis on systems $(1,2)$ by
\begin{equation}
\label{eq:bellstate-def-sm}
|\Phi_{\beta}\rangle_{1,2} := (\id\otimes \hat D_{\beta}^\dagger)|\Phi_d^+\rangle_{1,2}.
\end{equation}
Equivalently,
\begin{equation}
\label{eq:bellbra-def-sm}
\begin{aligned}
\langle\Phi_{\beta}|_{1,2}
=\langle\Phi_d^+|_{1,2}(\id\otimes \hat D_{\beta}), \quad
\langle\Phi_d^+|_{1,2}=\frac{1}{\sqrt d}\sum_{q=0}^{d-1}\langle q|_1\langle q|_2.
\end{aligned}
\end{equation}

\begin{lemma}[Three-party Bell-projection identity]
\label{lem:bell-teleport-sm}
Consider three $d$-level systems $(1,2,3)$. For any state vector $|\psi\rangle_1\in\C^d$ and any matrix
$\xi\in\M_d(\C)$ acting on system~$2$,
\begin{equation}
\label{eq:bell-teleport-id-sm}
(\langle\Phi_{\beta}|_{1,2}\otimes\id_3)\Bigl(|\psi\rangle_1\otimes(\xi\otimes\id)|\Phi_d^+\rangle_{23}\Bigr)
= \frac{1}{d}\xi^{\top}\hat D_{\beta}^{\top}|\psi\rangle_3.
\end{equation}
The right-hand side is an unnormalized state vector on system~$3$.
\end{lemma}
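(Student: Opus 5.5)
The plan is a direct computation in the computational basis $\{|k\rangle\}_{k=0}^{d-1}$ fixed above. Both sides of \eqref{eq:bell-teleport-id-sm} are linear in $|\psi\rangle$ and in $\xi$, so it suffices to expand everything in matrix elements and compare coefficients. No earlier result is needed beyond the definitions \eqref{eq:Phi-plus-sm}--\eqref{eq:bellbra-def-sm}.

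First I expand the ket. Write $|\psi\rangle_1=\sum_q\psi_q|q\rangle_1$ and $\xi=\sum_{j,k}\xi_{jk}|j\rangle\langle k|$ acting on system~2. Then
\begin{equation}
(\xi\otimes\id)|\Phi_d^+\rangle_{23}=d^{-1/2}\sum_{j,k}\xi_{jk}|j\rangle_2|k\rangle_3 .
\end{equation}
Next I use \eqref{eq:bellbra-def-sm} to write the bra as $\langle\Phi_\beta|_{1,2}=d^{-1/2}\sum_{q}\langle q|_1\langle q|_2\hat D_\beta$, where $\hat D_\beta$ acts on system~2 only.

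Contracting systems 1 and 2 then gives
\begin{equation}
\frac1d\sum_{q,j,k}\psi_q\,\langle q|\hat D_\beta|j\rangle\,\xi_{jk}\,|k\rangle_3
=\frac1d\sum_k\Bigl(\sum_q(\hat D_\beta\xi)_{qk}\psi_q\Bigr)|k\rangle_3 .
\end{equation}
The bracket is the $k$-th component of $(\hat D_\beta\xi)^{\top}|\psi\rangle$, and $(\hat D_\beta\xi)^{\top}=\xi^{\top}\hat D_\beta^{\top}$. Relabeling the resulting vector of coefficients as a state on system~3 yields exactly \eqref{eq:bell-teleport-id-sm}. The prefactor $d^{-1}$ is the product of the two $d^{-1/2}$ normalizations.

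The only delicate step is bookkeeping of transposes and orderings. Two points need care: that $\hat D_\beta$ in the bra sits on system~2 (not 1) and so multiplies $\xi$ on the left before transposition, and that the transpose is taken with respect to the fixed computational basis in which $|\Phi_d^+\rangle$ is defined. I would state this basis dependence explicitly. It is also worth remarking that $\hat D_\beta^{\top}=Z^{b}X^{-a}$ is again a Weyl operator up to a phase. This justifies the main-text claim that the transpose-induced relabeling of $\beta$ can be absorbed into the outcome label.
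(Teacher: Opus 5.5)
Your proposal is correct and is essentially the paper's own proof: both expand $|\psi\rangle$, $\xi$ and $|\Phi_d^+\rangle$ in the computational basis and contract systems 1 and 2 directly. The only cosmetic difference is that you keep $\hat D_\beta$ on system~2 and transpose the product $\hat D_\beta\xi$ at the end, whereas the paper first uses the transpose trick to move $\hat D_\beta^{\top}$ onto system~1 before contracting.
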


\begin{proof}
Write $|\psi\rangle_1=\sum_{r=0}^{d-1}\psi_r|r\rangle_1$. Moreover,
\begin{equation}
\begin{aligned}
(\xi\otimes\id)|\Phi_d^+\rangle_{23}
=\frac{1}{\sqrt d}\sum_{k=0}^{d-1}\Bigl(\sum_{u=0}^{d-1}\xi_{u,k}|u\rangle_2\Bigr)\otimes |k\rangle_3
=\frac{1}{\sqrt d}\sum_{u,k}\xi_{u,k}|u\rangle_2|k\rangle_3.
\end{aligned}
\end{equation}
Using the transpose trick $\langle\Phi_d^+|_{1,2}(\id\otimes \Xi)=\langle\Phi_d^+|_{1,2}(\Xi^{\top}\otimes\id)$ (valid for any $\Xi$), and taking $\Xi=\hat D_{\beta}$, we have
\begin{equation}
\langle\Phi_{\beta}|_{1,2}
=\langle\Phi_d^+|_{1,2}(\id\otimes \hat D_{\beta})
=\langle\Phi_d^+|_{1,2}(\hat D_{\beta}^{\top}\otimes \id).
\end{equation}
Therefore,
\begin{equation}
\begin{aligned}
(\langle\Phi_{\beta}|_{1,2}\otimes\id_3)\Bigl(|\psi\rangle_1\otimes(\xi\otimes\id)|\Phi_d^+\rangle_{23}\Bigr)
=&\frac{1}{d}\sum_{a,b}\sum_{r}\sum_{u,k}
(\hat D_{\beta})_{a b}\psi_r\xi_{u,k}
\langle a|r\rangle\langle b|u\rangle|k\rangle_3 \\
=&\frac{1}{d}\sum_{b,k}\xi_{b,k}\Bigl(\sum_{a}(\hat D_{\beta})_{a b}\psi_a\Bigr)|k\rangle_3 \\
=&\frac{1}{d}\,\xi^{\top}\hat D_{\beta}^{\top}|\psi\rangle_3,
\end{aligned}
\end{equation}
which is~\eqref{eq:bell-teleport-id-sm}.
\end{proof}

\noindent\emph{Remark.}
Lemma~\ref{lem:bell-teleport-sm} naturally produces the byproduct $\hat D_{\beta}^{\top}$.
For the standard choice $X|k\rangle=|k+1\bmod d\rangle$ and $Z|k\rangle=\zeta^k|k\rangle$ with
$\zeta=e^{2\pi i/d}$, one has $X^{\top}=X^{-1}$ and $Z^{\top}=Z$, hence
\begin{equation}
\label{eq:D-transpose-sm}
\begin{aligned}
\hat D_{(a,b)}^{\top}
=(X^a Z^b)^{\top}
= Z^b X^{-a} 
= \zeta^{-ab} X^{-a} Z^b
= \zeta^{-ab}\hat D_{(-a,b)} .
\end{aligned}
\end{equation}
The phase $\zeta^{-ab}$ is a unit-modulus scalar and can be absorbed into the overall branch prefactor.

Equivalently, using $\hat D_{\beta}^{\top}\propto \hat D_{\overline\beta}$ with $\overline\beta:=(-a,b)$ and absorbing the global phase into the branch prefactor, Eq.~\eqref{eq:bell-teleport-id-sm} can be rewritten as
\begin{equation}
(\langle\Phi_{\beta}|_{1,2}\otimes\id_3)\Bigl(|\psi\rangle_1\otimes(\xi\otimes\id)|\Phi_d^+\rangle_{23}\Bigr)
= \frac{1}{d}\xi^{\top}\hat D_{\overline\beta}|\psi\rangle_3.
\end{equation}
In the main text we relabel the measurement outcome $\overline\beta\mapsto\beta$, so the update is written as $\xi\mapsto d^{-1}\xi^{\top}\hat D_{\beta}$.

\smallskip
\noindent\emph{Explicit matrices for $d=2,3$.}
For $d=2$ one has $\zeta=-1$ and
\begin{equation}
X=\begin{pmatrix}0&1\\[2pt]1&0\end{pmatrix},\qquad
Z=\begin{pmatrix}1&0\\[2pt]0&-1\end{pmatrix},
\end{equation}
so $\hat D_{(a,b)}=X^aZ^b\in\{\id,X,Z,XZ\}$ and $\hat D_{(a,b)}^{\top}=(-1)^{ab}\hat D_{(a,b)}$.
For $d=3$ one has $\zeta=e^{2\pi i/3}$ and
\begin{equation}
X=\begin{pmatrix}
0&0&1\\
1&0&0\\
0&1&0
\end{pmatrix},\qquad
Z=\mathrm{diag}(1,\zeta,\zeta^2),
\end{equation}
so $\hat D_{(a,b)}=X^aZ^b$ with $a,b\in\{0,1,2\}$ and the transpose induces the nontrivial relabeling $a\mapsto -a$.

\smallskip
\paragraph*{Proof of Proposition~\ref{prop:nc-amp}.}
\begin{proof}
Fix a collision-free monitoring record $\boldsymbol c$, hence the post-selected block sub-matrix
$\mathbf S=(S_{t,k})_{t,k=1}^n$, and fix fusion outcomes
${\boldsymbol\beta}=(\beta_1,\dots,\beta_n)$.

\emph{(i) FLO kernel.}
Conditioned on $\boldsymbol c$, Lemma~\ref{lem:ftpd} yields the $n$-qudit operator kernel
\begin{equation}
\label{eq:det-otimes-sm}
\det_{\otimes}(\mathbf S)
=\sum_{\sigma\in \mathfrak{S}_n}\sgn(\sigma)\Bigl(\bigotimes_{t=1}^n S_{t,\sigma(t)}\Bigr)\mathcal P_{\sigma^{-1}}.
\end{equation}

\emph{(ii) Fixed-order fusion contraction map.}
For fixed $\boldsymbol\beta$, the post-selected Bell-fusion readout defines a fixed tensor network determined by the fusion geometry.
Excluding the boundary vectors $|\ell\rangle$ and $\langle r|$, it induces a $\C$-linear contraction map
\begin{equation}
\label{eq:C-map-sm}
\mathcal C_{\boldsymbol\beta}:\ \M_{d^n}(\C)\ \to\ \M_d(\C),
\end{equation}
independent of $\mathbf S$, such that by definition
\begin{equation}
\label{eq:T-def-sm}
\mathcal{T}_{\boldsymbol\beta}(\mathbf S)\ :=\ \mathcal C_{\boldsymbol\beta}\left(\det_{\otimes}(\mathbf S)\right)\in \M_d(\C).
\end{equation}
The branch amplitude is then obtained as
\begin{equation}
\mathcal A_d(\boldsymbol c,{\boldsymbol\beta})=\langle r|\mathcal{T}_{\boldsymbol\beta}(\mathbf S)|\ell\rangle.
\end{equation}
By construction, $\mathcal C_{\boldsymbol\beta}$ includes the $d^{-1}$ scalar from each Bell projection.

\emph{(iii) Local update rule and dressed blocks.}
Evaluating the contraction sequentially in the order $t=1,\dots,n$,
each fusion step is exactly a three-party contraction of the form in Lemma~\ref{lem:bell-teleport-sm}.
Using the outcome relabeling in~\eqref{eq:D-transpose-sm} to match the main-text convention, the update at step $t$
inserts the ordered map $\xi\mapsto d^{-1}\xi^{\top}\hat D_{\beta_t}$ on the boundary wire.
This motivates the dressed blocks
\begin{equation}
\label{eq:dressed-sm}
\widetilde S_{t,k}:=S_{t,k}^{\top}\hat D_{\beta_t}.
\end{equation}

\emph{(iv) Trace-polynomial (path--cycle) structure.}
Fix a permutation term $\sigma$ in~\eqref{eq:det-otimes-sm}.
Composing the wiring tensor $\mathcal P_{\sigma^{-1}}$ with the fixed fusion geometry induces a directed graph
on the auxiliary vertices $\{A_0,\dots,A_n\}$: for each fusion step $t=1,\dots,n$ there is a directed edge
\begin{equation}
A_{t-1}\xrightarrow{\ \widetilde S_{t,\sigma(t)}\ } A_{\sigma(t)},
\end{equation}
where $\widetilde S_{t,k}=S_{t,k}^{\top}\hat D_{\beta_t}$ are the dressed blocks.
Since $\sigma$ is a permutation of $\{1,\dots,n\}$, every vertex $A_j$ with $j\in\{1,\dots,n\}$ has indegree one,
while $A_0$ has indegree zero and $A_n$ has outdegree zero. Consequently, the graph decomposes uniquely into a single
open directed path from $A_0$ to $A_n$ together with a disjoint union of directed cycles.

Evaluating the tensor network in the fixed pairing order therefore yields a path--cycle form:
the open path contributes an ordered matrix monomial $M_\sigma\in\M_d(\C)$, where later fusion steps multiply on the left
(i.e., in 
the order of map composition), and each directed cycle $C$ contributes a scalar loop factor
$\Tr(\Pi_C)$ given by the ordered product $\Pi_C$ of the labels around that cycle. Hence the $\sigma$-term evaluates (up to the fixed scalar $d^{-n}$ already absorbed in $\mathcal C_{\boldsymbol\beta}$) to
\begin{equation}
\mathcal{T}_\sigma(\mathbf S)=\Bigl(\prod_{C}\Tr(\Pi_C)\Bigr) M_\sigma.
\end{equation}
Summing over $\sigma\in \mathfrak{S}_n$ with signs and using linearity gives
\begin{equation}
\mathcal{T}_{\boldsymbol\beta}(\mathbf S)\in \M_d(\C)\langle \widetilde S_{t,k}:\ 1\le t,k\le n\rangle_{\Tr}.
\end{equation}
In particular, by the boundary projection we have
$\mathcal A_d(\boldsymbol c,{\boldsymbol\beta})=\langle r|\mathcal{T}_{\boldsymbol\beta}(\mathbf S)|\ell\rangle$.

\end{proof}

\section*{Appendix D: Proof of Theorem~\ref{thm:memory-barrier} (Exponential sequential-memory barrier)}\label{app:memory-barrier}

This appendix formalizes the symbolic ``generic non-commutative'' benchmark used in the main text and proves
Theorem~\ref{thm:memory-barrier}.
We keep the same fixed fusion geometry in the protocol
but we treat the dressed blocks as algebraically independent non-commuting generators.
The resulting branch operator is a matrix-valued non-commutative trace polynomial, and we show that any
exact order-respecting sequential contraction requires exponential intermediate memory (MPO bond dimension). Order-respecting MPO contractions can be viewed (up to constant factors depending on $d$) as ordered non-commutative algebraic branching programs computing the corresponding trace polynomial; Theorem~\ref{thm:memory-barrier} follows via Nisan’s rank method~\cite{Nisan1991Lower} adapted to the trace-indeterminate extension.

\paragraph*{Free trace-polynomial model.}
Fix $n$ and $d\ge 2$ and a branch $(\boldsymbol c,\boldsymbol\beta)$.
As in the main text, denote the dressed blocks by $\widetilde S_{t,k}:=S_{t,k}^{\top}\hat D_{\beta_t}$.
In the symbolic benchmark we interpret each $\widetilde S_{t,k}$ as a formal non-commuting generator.

Closed directed wiring cycles contribute scalar trace-loop factors (cf.\ End Matter).
To model these 
loop contributions while remembering which generators appear in the 
loop, we adjoin central commuting
indeterminates indexed by 
cyclic words.
Let $\mathrm{Cyc}(\widetilde S)$ denote the set of cyclic equivalence classes of \emph{non-empty} words in the alphabet
$\{\widetilde S_{t,k}\}_{t,k=1}^n$.
For each $\gamma\in \mathrm{Cyc}(\widetilde S)$ introduce a central commuting indeterminate $\tau_\gamma$, and define
\begin{equation}
\C\langle \widetilde S\rangle_{\Tr}
:= \C\langle \widetilde S_{t,k}:\ t,k\in\{1,2,\ldots,n\}\rangle \ \otimes\ \C[\tau_\gamma:\gamma\in\mathrm{Cyc}(\widetilde S)].
\end{equation}
We interpret each loop factor produced by a directed cycle $C$ as the corresponding $\tau_{\gamma(C)}$, where
$\gamma(C)$ is the cyclic word read along that cycle.

This benchmark is formulated in the free (non-commutative) trace-polynomial algebra and does not impose matrix polynomial identities specific to fixed dimension $d$ (e.g., Cayley--Hamilton).

\paragraph*{Evaluation convention for trace indeterminates.}
Whenever we specialize the non-commuting generators $\widetilde S_{t,k}$ to concrete matrices in $\M_d(\C)$,
we evaluate the trace indeterminates by the (unnormalized) trace of the corresponding matrix word:
for any assignment $\mathrm{ev}$ of all $\widetilde S_{t,k}$ to matrices, extend $\mathrm{ev}$ to
$\C\langle \widetilde S\rangle_{\Tr}$ by
\begin{equation}
\mathrm{ev}(\tau_\gamma):=\Tr\left(\mathrm{ev}(w_\gamma)\right),
\end{equation}
where $w_\gamma$ is any representative word of the cyclic class $\gamma$.
This is well-defined by cyclicity of the trace, and ensures in particular that if any generator in $w_\gamma$ evaluates to $0$,
then $\mathrm{ev}(\tau_\gamma)=0$.
With this convention,
\begin{equation}
\mathcal{T}_{\boldsymbol\beta}(\mathbf S)\in \M_d(\C)\langle \widetilde S\rangle_{\Tr}
\end{equation}
exactly as in Proposition~\ref{prop:nc-amp}, now interpreted purely symbolically (no matrix trace identities imposed).

\paragraph*{Order-respecting contractions and bond dimension.}
An order-respecting sequential contraction processes fusion steps in the enforced order $t=1,2,\dots,n$.
Across the cut between steps $t$ and $t{+}1$, the virtual index must encode the open auxiliary wirings
(i.e., the collection of open strands that still need to be merged/closed by later fusion steps).
The local update at step $t$ is linear in the row-$t$ generators $\{\widetilde S_{t,k}\}_{k=1}^n$,
but it acts on a virtual open-wiring space that may carry multiple open strands.
Merging two strands corresponds to the bilinear multiplication map $m(A\otimes B)=AB$ (linearized via tensor products),
and closing a directed cycle yields a commuting scalar loop factor $\tau_\gamma$.
This is precisely the mechanism that allows the final monomials in $\mathcal{T}_{\boldsymbol\beta}$ to have
non-monotone step labels (cf.\ End Matter).

For cut $t$, let $\chi_t$ denote the minimal possible virtual dimension across that cut among all exact
order-respecting sequential contractions, equivalently (up to constant factors depending only on $d$), the minimal width
at layer $t$ of an ordered non-commutative algebraic branching program computing $\mathcal{T}_{\boldsymbol\beta}$ in the
above free trace-polynomial model.

\paragraph*{Proof of Theorem~\ref{thm:memory-barrier}.}
\begin{proof}
Fix a cut $t\in\{0,1,\dots,n\}$.
We lower bound $\chi_t$ by exhibiting $\binom{n}{t}$ linearly independent ``used-leg'' sectors that any
order-respecting sequential contraction must distinguish after $t$ steps.

In each permutation term of the FLO kernel $\det_{\otimes}(\mathbf S)$, fusion step $s$ selects exactly one input-leg label
$k=\sigma(s)$, and each label is used exactly once.
Hence after the first $t$ fusion steps, the set of already used input legs, $\{\sigma(1),\dots,\sigma(t)\}$, is a subset of $\{1,2,\ldots,n\}$ of size $t$. There are $\binom{n}{t}$ such possible subsets, which we will index by $I$.

Since we work in the free trace-polynomial model, the standard basis elements—consisting of a word in the non-commuting generators multiplied by a monomial in the trace indeterminates—are linearly independent.

Now form a rank witness matrix $M$ indexed by $t$-subsets.
For each $t$-subset $I=\{i_1<\cdots<i_t\}\subseteq\{1,2,\ldots,n\}$ define a prefix evaluation $\mathrm{ev}^{\le t}_I$ on the generators $\{\widetilde S_{s,k}\}$ by
\begin{equation}
\widetilde S_{s,k}\mapsto
\begin{cases}
\id_d,& s\le t,\ k=i_s,\\
0,& s\le t,\ k\neq i_s,
\end{cases}
\end{equation}
and regard $\mathrm{ev}^{\le t}_I$ as a partial assignment of the generators, leaving all generators with $s>t$ unassigned.
For each $t$-subset $J=\{j_1<\cdots<j_t\}$ define a suffix assignment $\mathrm{ev}^{>t}_J$ by first letting
$\{u_1<\cdots<u_{n-t}\}=\{1,2,\ldots,n\}\setminus J$, and setting
\begin{equation}
\widetilde S_{s,k}\mapsto
\begin{cases}
\id_d,& s> t,\ k=u_{s-t},\\
0,& s> t,\ k\neq u_{s-t},
\end{cases}
\end{equation}
leaving all generators with $s\le t$ unassigned.
Let $\mathrm{ev}_{I,J}$ denote the combined full evaluation obtained by applying $(\mathrm{ev}^{\le t}_I,\mathrm{ev}^{>t}_J)$,
and extend $\mathrm{ev}_{I,J}$ to the trace indeterminates by the trace convention above:
$\mathrm{ev}_{I,J}(\tau_\gamma)=\Tr(\mathrm{ev}_{I,J}(w_\gamma))$.

In order to proceed, we
define the $\binom{n}{t}\times\binom{n}{t}$ matrix
\begin{equation}
M_{I,J}:=\Tr\left(\mathrm{ev}_{I,J}\big(\mathcal{T}_{\boldsymbol\beta}(\mathbf S)\big)\right),
\end{equation}
where $\Tr$ is the usual trace on $\M_d(\C)$.

By construction, under $\mathrm{ev}_{I,J}$ each row $s$ has exactly one nonzero dressed block (equal to $\id_d$),
at column $i_s$ for $s\le t$ and at column $u_{s-t}$ for $s>t$.
A permutation term $\sigma$ in $\det_{\otimes}(\mathbf S)$ can survive only if it selects these nonzero blocks in every row,
i.e., only if $\sigma(s)$ equals the designated nonzero column in row $s$ for all $s$.
Such a permutation exists if and only if the designated column choices use each label exactly once, which holds exactly when $I=J$.
Therefore:
\begin{itemize}[leftmargin=*,itemsep=1pt,topsep=2pt]
\item If $I\neq J$, no permutation term survives and $M_{I,J}=0$.
\item If $I=J$, exactly one permutation term survives; every surviving loop word evaluates to $\id_d$ and hence each corresponding trace indeterminate evaluates to $\Tr(\id_d)=d$, so $M_{I,I}\neq 0$.
\end{itemize}
Thus $M$ is diagonal with nonzero diagonal entries, hence
\begin{equation}
\rank(M)=\binom{n}{t}.
\end{equation}

Finally, relate this to an order-respecting sequential contraction.
Fix any exact sequential contraction scheme that processes fusion steps $1,2,\ldots,n$ in order and has bond dimension $\chi_t$
across the cut between steps $t$ and $t{+}1$.
Under the evaluations $\mathrm{ev}_{I,J}$, every local tensor entry becomes a constant $d\times d$ matrix (possibly zero), because it is linear in
$\widetilde S_{s,k}$ and $\widetilde S_{s,k}\in\{0,\id_d\}$ under $\mathrm{ev}_{I,J}$.
After applying $\Tr$ to the final $\M_d(\C)$ output, the resulting scalar factors through the cut space
$(\M_d(\C))^{\chi_t}$, which has dimension $d^2\chi_t$.

Concretely, for each $t$-subset $I$ let
\begin{equation}
A_I=(A_{I,1},\dots,A_{I,\chi_t})\in (\M_d(\C))^{\chi_t}
\end{equation}
be the cut state obtained after contracting the first $t$ steps under
$\mathrm{ev}^{\le t}_I$, and for each $t$-subset $J$ let
\begin{equation}
B_J=(B_{J,1},\dots,B_{J,\chi_t})\in (\M_d(\C))^{\chi_t}
\end{equation}
be the corresponding suffix state under $\mathrm{ev}^{>t}_J$.
Cutting the tensor network between steps $t$ and $t{+}1$ yields
\begin{equation}
M_{I,J}= \sum_{x=1}^{\chi_t}\Tr(B_{J,x}A_{I,x}).
\end{equation}
After vectorizing the matrix entries, there exist vectors $a_I,b_J\in\C^{d^2\chi_t}$ such that
\begin{equation}
M_{I,J}= b_J^{\top} a_I.
\end{equation}
Equivalently, $M=AB^{\top}$ with inner dimension $d^2\chi_t$, and hence
\begin{equation}
\rank(M)\le d^2\chi_t .
\end{equation}
Combining with $\rank(M)=\binom{n}{t}$ gives
\begin{equation}
\chi_t\ge d^{-2}\binom{n}{t}.
\end{equation}
Taking $t=\lfloor n/2\rfloor$ yields $\chi_{\lfloor n/2\rfloor}=2^{\Omega(n)}$ for fixed $d$.
\end{proof}

\paragraph*{Remark (scope beyond the free-algebra model).}
Although Theorem~\ref{thm:memory-barrier} is phrased in the free trace-polynomial benchmark, the rank-witness proof uses only concrete matrix substitutions, together with factorization through the cut space. Hence the same lower bound applies to any exact fusion-order--respecting sequential contraction scheme that is required to work uniformly over matrix-valued block inputs; by standard unitary completion in the dilute regime, the witness family can be embedded into the monitored-FLO physical family. This should be interpreted as a uniform worst-case lower bound on such contraction schemes, not as an instancewise lower bound on the minimal bond dimension of a single fixed numerical branch.

\section*{Appendix E: Proof of Theorem~\ref{thm:cond-sampling} (Conditional hardness of approximate sampling)}
\begin{proof}[Proof of Theorem~\ref{thm:cond-sampling}]
Let $I=(\mathbf V,\boldsymbol c)$ denote a discretized collision-free
monitored-FLO instance, and write
$p_I(\boldsymbol\beta):=p(\boldsymbol\beta\,|\,\boldsymbol c)$
for the conditional distribution on
$N:=d^{2n}$ outcomes.

Assume, towards a contradiction, that there exists a classical probabilistic
polynomial-time algorithm which, on a non-negligible fraction of instances $I$,
samples from a distribution $q_I$ satisfying
$d_{\mathrm{TV}}(p_I,q_I)\le \varepsilon$ with $\varepsilon=1/\poly(n)$.
By Stockmeyer's approximate counting theorem, oracle access to this sampler
yields a $\mathrm{BPP}^{\mathrm{NP}}$ procedure which, for any fixed
$\boldsymbol\beta$, approximates $q_I(\boldsymbol\beta)$ within multiplicative
error $1+1/\poly(n)$.

By assumption, the sampler succeeds on a non-negligible fraction of instances,
and typical instances anti-concentrate: there exist constants $\alpha,\gamma>0$
such that at least $\alpha N$ outcomes satisfy $p_I(\boldsymbol\beta)\ge \gamma/N$.
The intersection of these two instance sets is therefore non-negligible.
Fix any instance $I$ in this intersection.

For any such instance $I$, the total-variation bound implies
\begin{equation}
\sum_{\boldsymbol\beta\in\Z_d^{2n}}
|p_I(\boldsymbol\beta)-q_I(\boldsymbol\beta)| \le 2\varepsilon .
\end{equation}
Hence at most $\alpha N/2$ outcomes can violate
$|p_I(\boldsymbol\beta)-q_I(\boldsymbol\beta)|\le 4\varepsilon/(\alpha N)$.
Therefore, for at least $\alpha N/2$ outcomes we simultaneously have
$p_I(\boldsymbol\beta)\ge \gamma/N$ and
$|p_I(\boldsymbol\beta)-q_I(\boldsymbol\beta)|\le 4\varepsilon/(\alpha N)$, so
\begin{equation}
\frac{|p_I(\boldsymbol\beta)-q_I(\boldsymbol\beta)|}{p_I(\boldsymbol\beta)}
\le \frac{4\varepsilon}{\alpha\gamma}
= \frac{1}{\poly(n)} .
\end{equation}
Thus, on a constant fraction of outcomes,
$q_I(\boldsymbol\beta)=(1\pm 1/\poly(n))\,p_I(\boldsymbol\beta)$.
Combining this with the Stockmeyer estimate for $q_I(\boldsymbol\beta)$ yields a
$\mathrm{BPP}^{\mathrm{NP}}$ procedure that approximates
$p_I(\boldsymbol\beta)$ within relative error $1/\poly(n)$ for a constant
fraction of outcomes on a non-negligible fraction of instances.

This contradicts Conjecture~\ref{conj:avg-hard-sampling}. Hence, unless the
polynomial hierarchy collapses, no such classical sampler exists.
\end{proof}

\section*{Appendix F: Worst-case exact-value hardness benchmarks}
\label{app:worstcase-hardness}

This appendix collects two complementary \emph{worst-case exact-value} benchmarks relevant to the algebraic complexity landscape surrounding monitored-FLO branches. While these evaluations provide rigorous algebraic context for measurement-induced non-commutativity, they are formally distinct from the average-case branch probabilities $p(\boldsymbol\beta\,|\,\boldsymbol c)$ governing the sampling-hardness conjecture in the main text.

First, we recall a canonical family of non-commutative polynomials, namely Cayley’s row-ordered determinant, whose exact evaluation is $\#\mathrm P$-hard in the worst case even for constant-size matrix entries.
Second, we give a protocol-adapted benchmark for a different scalar observable obtained by cyclic (trace) closure of the boundary wire:
for a specially restricted subfamily, this cyclic-closure value reduces to the fermionant, which is $\#\mathrm P$-hard for
fixed $d>2$ (and $\oplus\mathrm P$-hard for $d=2$). We emphasize that this fermionant reduction is an exact-value, worst-case statement for
the cyclic-closure observable and should not be conflated with the open-boundary branch probabilities studied in the main text.

\paragraph*{Remark.}
The commuting-control data in main-text Figs.~3--4 are included solely to isolate the role of measurement-induced non-commutativity
for typical open-boundary branches via MPO-bond diagnostics.
The worst-case benchmark below instead concerns a specially restricted commuting subfamily and does not enter the
main-text sampling-hardness discussion.

\medskip
Let $\mathfrak A$ be an associative (possibly non-commutative) algebra and let $\mathbf B=(B_{i,j})\in \M_N(\mathfrak A)$.
Cayley’s row-ordered non-commutative determinant is the non-commutative polynomial
\begin{equation}
\label{eq:cayley-det-appE}
\det_{\rm Cay}(\mathbf B)\ :=\ \sum_{\sigma\in \mathfrak S_N}\sgn(\sigma)
B_{1,\sigma(1)} B_{2,\sigma(2)}\cdots B_{N,\sigma(N)} \ \in\ \mathfrak A .
\end{equation}

\begin{proposition}[Known worst-case hardness of Cayley’s determinant over $2\times 2$ matrix entries~\cite{Chien2011Almost,Arvind2010On,Gentry2014Noncommutative}]
\label{prop:ncdet-hard-appE}
Let $\mathbb F$ be a field of characteristic $p\ge 0$.
Given an $N\times N$ matrix whose entries lie in $\M_2(\mathbb F)$, exactly computing $\det_{\rm Cay}$ is
$\#\mathrm P$-hard under polynomial-time Turing reductions when $p=0$, and $\mathrm{Mod}_p\mathrm P$-hard
under polynomial-time Turing reductions when $p>0$ is odd.
\end{proposition}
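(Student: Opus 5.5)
The plan is to reduce from a known $\#\mathrm P$-hard commutative quantity, the permanent, to Cayley's row-ordered determinant with entries in $\M_2(\mathbb F)$. This follows the construction of Chien--Harsha--Sinclair--Srinivasan and Gentry. The guiding idea is that non-commutativity of $2\times 2$ matrices can ``undo'' the sign of a permutation. This works because the row-ordered product $B_{1,\sigma(1)}\cdots B_{N,\sigma(N)}$ remembers the order in which columns are visited.

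First, I would encode each column index $j$ into a small gadget of $2\times2$ matrices. The aim is that the ordered product along rows, evaluated between fixed boundary vectors $\langle u|\cdot|v\rangle$, picks up a factor $\sgn(\sigma)$ that cancels the determinant's sign. Concretely, I would follow Chien et al.\ and build from a $k\times k$ matrix $A$ a larger matrix $\mathbf B$ of size $N=\poly(k)$. The blocks of $\mathbf B$ are of the form $a_{ij}E_{p,q}$ and identity-like matrices, with $E_{p,q}$ the elementary matrix units of $\M_2$. The construction should make $\langle u|\det_{\rm Cay}(\mathbf B)|v\rangle=c\cdot\mathrm{per}(A)$ for a known nonzero constant $c$ (for instance a power of $2$ or of $-1$). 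A cleaner alternative is the ``Clifford-type'' route: use anticommuting matrices $X,Z\in\M_2$ to simulate a Grassmann/Clifford algebra. Inserting such a generator for each transposition flips signs, so an ordered product over rows contributes $\sgn(\sigma)^2=1$ on a designated matrix entry.

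Second, I would verify the combinatorial identity term by term. For each $\sigma$, the ordered product of $2\times 2$ gadget matrices must evaluate to $\sgn(\sigma)\prod_i a_{i\sigma(i)}$ times a fixed constant in the chosen entry. Spurious terms, namely permutations of the enlarged matrix not corresponding to permutations of $A$, must cancel or vanish. I expect this to be the main obstacle, because $2\times2$ matrices satisfy polynomial identities and cannot freely represent the free algebra. I would first try the known trick of replacing $\M_2$ by a suitable commuting-variable interpolation: embed a polynomial number of free variables, and recover $\mathrm{per}(A)$ by polynomial interpolation over several evaluations. This uses Turing rather than many-one reductions, matching the statement.

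Third, I would transfer the result to the characteristic settings. Over $p=0$, $\#\mathrm P$-hardness follows from Valiant's theorem for the permanent of integer (even $0/1$) matrices, with the interpolation carried out over $\mathbb Q$. For odd $p$, the same reduction computes $\mathrm{per}(A)\bmod p$, which is $\mathrm{Mod}_p\mathrm P$-hard by Valiant. Characteristic $2$ is excluded since there $\mathrm{per}=\det$. Since the paper labels this proposition as known, the final step is simply to cite~\cite{Chien2011Almost,Gentry2014Noncommutative} for the gadget construction and check that their reductions are polynomial-time Turing reductions with entries in $\M_2(\mathbb F)$.
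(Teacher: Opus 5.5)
The paper gives no proof of this proposition. It quotes the result from Chien--Harsha--Sinclair--Srinivasan (together with Arvind--Srinivasan and Gentry) and uses it only as structural context. So the step that actually carries your argument matches the paper: citing those reductions and checking that they are polynomial-time Turing reductions with entries in $\M_2(\mathbb F)$. Your characteristic bookkeeping is also fine:
the permanent of $0/1$ matrices is $\#\mathrm P$-hard, and $\mathrm{per}\bmod p$ is $\mathrm{Mod}_p\mathrm P$-hard for odd $p$, provided your normalizing constant $c$ is a unit mod $p$. In characteristic $2$ the permanent route collapses because $\mathrm{per}\equiv\det$.

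What you write before the citation is not a proof, however, and one of its routes cannot work. The ``Clifford-type'' alternative is impossible in $\M_2(\mathbb F)$. Suppose each column $j$ carried a fixed $G_j$ with $G_{\sigma(1)}\cdots G_{\sigma(n)}=\sgn(\sigma)\,Q\neq0$ for all $\sigma$. If the $G_j$ are invertible, comparing $\sigma$ with $\sigma\circ(i\ i{+}1)$ forces $G_aG_b=-G_bG_a$ for all $a\neq b$. In characteristic $\neq2$, pairwise anticommuting invertible $2\times2$ matrices are traceless and linearly independent, so there are at most three. A short case check shows that singular $G_j$ do not help once $n\ge4$. For instance, two anticommuting square-zero $2\times2$ matrices multiply to zero. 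Likewise, rank-one $G_j$ are forced either to commute with one another or to share an eigenvector with the invertible ones, which contradicts the required sign flips. A sign-tracking Clifford algebra on $n$ generators needs dimension $2^{\lfloor n/2\rfloor}$, which is exactly why the $2\times2$ result is nontrivial.

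The real content of the cited theorem is the step you leave open. That is either a row- and column-dependent $2\times2$ gadget in which the spurious permutation terms provably cancel (the permanent reductions of Arvind--Srinivasan and Chien et al.), or Gentry's Barrington-type product programs, whose Cayley determinant counts satisfying assignments of an $\mathrm{NC}^1$ circuit. Your interpolation idea is not specified enough to play that role. Interpolation in an auxiliary parameter also needs enough distinct field elements, which fails over small fields such as $\mathbb F_3$. So, like the paper, your proposal establishes the statement only through the citation.
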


Proposition~\ref{prop:ncdet-hard-appE} is included as structural rather than reduction-theoretic context.
In monitored-FLO, each permutation contribution to the branch operator
$\mathcal{T}_{\boldsymbol\beta}(\mathbf S)$ consists of an ordered product of dressed blocks along the unique open boundary path, together with additional scalar trace-loop factors from closed wiring cycles; see Proposition~\ref{prop:nc-amp}.
Thus $\mathcal{T}_{\boldsymbol\beta}(\mathbf S)$ belongs to the same broader landscape of ordered non-commutative polynomial expressions, albeit enriched here by trace-loop structure.

\medskip
Fix a collision-free record $\boldsymbol c$ and fusion outcomes $\boldsymbol\beta$, and let
$\mathcal{T}_{\boldsymbol\beta}(\mathbf S)\in \M_d(\C)$ denote the corresponding open-boundary branch operator from the main text, defined through
\begin{equation}
\mathcal A_d(\boldsymbol c,{\boldsymbol\beta})=\langle r|\mathcal{T}_{\boldsymbol\beta}(\mathbf S)|\ell\rangle,
\qquad
\widetilde S_{t,k}:=S_{t,k}^{\top}\hat D_{\beta_t}.
\end{equation}
To formulate the second worst-case benchmark, we now introduce a different scalar observable obtained by cyclically closing the boundary wire of $\mathcal{T}_{\boldsymbol\beta}(\mathbf S)$.

Introduce a reference qudit $R\simeq\C^d$ and prepare $|\Phi_d^+\rangle_{R,A_0}$ in place of a pure boundary state on $A_0$.
After running the same monitored-FLO evolution and the same Bell-fusion layer producing the branch operator
$\mathcal{T}_{\boldsymbol\beta}(\mathbf S)$ on the boundary leg $A_n$, close the boundary by projecting $(R,A_n)$ onto
$\langle\Phi_d^+|$.
Define the resulting (unnormalized) cyclic-closure scalar
\begin{equation}
\label{eq:Z-cyclic-def-appE}
\mathcal Z_d(\boldsymbol c,{\boldsymbol\beta})
:=
\langle\Phi_d^+|_{R,A_n}\Bigl(\id_R\otimes \mathcal{T}_{\boldsymbol\beta}(\mathbf S)\Bigr)|\Phi_d^+\rangle_{R,A_0}.
\end{equation}
We stress that $\mathcal Z_d(\boldsymbol c,{\boldsymbol\beta})$ is distinct from the open-boundary branch amplitude
$\mathcal A_d(\boldsymbol c,{\boldsymbol\beta})$ governing the sampling problem in the main text; it is introduced here only as a worst-case exact-value benchmark.
For any $\Xi\in\M_d(\C)$,
\begin{equation}
\label{eq:Phi-trace-id-appE}
\langle\Phi_d^+|(\id\otimes \Xi)|\Phi_d^+\rangle = \frac{1}{d}\Tr(\Xi),
\end{equation}
and hence, the cyclic-closure scalar can be written 
as
\begin{equation}
\label{eq:Z-is-trace-appE}
\mathcal Z_d(\boldsymbol c,{\boldsymbol\beta})
=
\frac{1}{d}\Tr\big(\mathcal{T}_{\boldsymbol\beta}(\mathbf S)\big),
\end{equation}
where $\Tr$ is the usual trace on $\M_d(\C)$.

\smallskip
In the path--cycle expansion of Proposition~\ref{prop:nc-amp}, each FLO permutation term
$\sigma\in\mathfrak S_n$ induces directed edges
\begin{equation}
\label{eq:edges-open-appE}
A_{t-1}\xrightarrow{\ \widetilde S_{t,\sigma(t)}\ } A_{\sigma(t)},
\qquad t=1,\dots,n.
\end{equation}
After cyclic closure, the previously open $A_0\to A_n$ path is also traced and becomes an additional cycle.

Let $\#\mathrm{cyc}(\pi)$ denote the number of cycles of a permutation $\pi\in\mathfrak S_n$, and define the fixed $n$-cycle
\begin{equation}
\label{eq:pi0-def-appE}
\pi_0\in\mathfrak S_n,\qquad
\pi_0(t):=
\begin{cases}
t-1,& t=2,\dots,n,\\
n,& t=1,
\end{cases}
\qquad\text{equivalently }\ \pi_0=(1\ n\ n{-}1\ \cdots\ 2),
\end{equation}
so $\sgn(\pi_0)=(-1)^{n-1}$.

After cyclic closure, the wiring sends each vertex $A_j$ (for $j=1,\dots,n$) to
$A_{\sigma(\pi_0^{-1}(j))}$, where $\pi_0^{-1}$ is the forward cyclic shift on
$\{1,\dots,n\}$. Hence the directed cycles in the cyclic-closure wiring are exactly
the cycles of $\sigma\circ\pi_0^{-1}$, so the number of directed cycles after closure is
$\#\mathrm{cyc}(\sigma\circ\pi_0^{-1})$.

\medskip
We now exhibit a restricted subfamily of instances for which $\mathcal Z_d$ reduces to the fermionant.

\begin{definition}[Fermionant~\cite{Mertens2013The}]
\label{def:fermionant-appE}
For $k\in\C$ and $\mathbf W\in \M_n(\C)$, the fermionant is
\begin{equation}
\label{eq:fermionant-def-appE}
\mathrm{Ferm}_k(\mathbf W)
:=(-1)^n\sum_{\pi\in\mathfrak S_n}(-k)^{\#\mathrm{cyc}(\pi)}\prod_{t=1}^n W_{t,\pi(t)}
=\sum_{\pi\in\mathfrak S_n}\sgn(\pi)k^{\#\mathrm{cyc}(\pi)}\prod_{t=1}^n W_{t,\pi(t)}.
\end{equation}
\end{definition}

Fix ${\boldsymbol\beta}=\mathbf 0$ (so $\hat D_{\beta_t}=\id_d$ for all $t$).
(The probability of post-selecting ${\boldsymbol\beta}=\mathbf 0$ is irrelevant for worst-case exact-value hardness.)
Assume the dressed blocks are scalar multiples of the identity,
\begin{equation}
\label{eq:scalar-blocks-appE}
\widetilde S_{t,k}=a_{t,k}\id_d
\qquad (1\le t,k\le n),
\end{equation}
for some scalars $a_{t,k}\in\C$ (e.g., internal-state-independent propagators $\mathbf V=U\otimes \id_d$).

Define $\mathbf W\in\M_n(\C)$ by the fixed row shift
\begin{equation}
\label{eq:W-rowshift-appE}
W_{t,k}:=a_{\pi_0^{-1}(t),k},
\end{equation}
with $\pi_0$ from~\eqref{eq:pi0-def-appE}.
Under~\eqref{eq:scalar-blocks-appE}, the ordered label product around any directed cycle $C$ equals
\(\Pi_C=\Bigl(\prod_{t\in C} a_{t,\sigma(t)}\Bigr)\id_d\), and hence contributes the trace factor $\Tr(\Pi_C)=d\prod_{t\in C} a_{t,\sigma(t)}$.
Using~\eqref{eq:Z-is-trace-appE} together with the above identification of the directed cycles after closure with the cycles of $\sigma\circ\pi_0^{-1}$, the cyclic-closure value admits the explicit permutation expansion
\begin{equation}
\label{eq:Z-sigma-sum-appE}
\mathcal Z_d(\boldsymbol c,\mathbf 0)
=
\frac{1}{d^{n+1}}\sum_{\sigma\in\mathfrak S_n}\sgn(\sigma)\,d^{\#\mathrm{cyc}(\sigma\circ\pi_0^{-1})}\prod_{t=1}^n a_{t,\sigma(t)},
\end{equation}
where $d^{-n}$ is the fixed Bell-projection prefactor already absorbed into $\mathcal{T}_{\mathbf 0}$ and the extra $d^{-1}$ is
from~\eqref{eq:Z-is-trace-appE}.
Making the bijective change of variables $\pi=\sigma\circ\pi_0^{-1}$ (hence $\sigma=\pi\circ\pi_0$) and reindexing rows gives
\begin{equation}
\prod_{t=1}^n a_{t,\sigma(t)}
=\prod_{t=1}^n a_{t,\pi(\pi_0(t))}
=\prod_{s=1}^n a_{\pi_0^{-1}(s),\pi(s)}
=\prod_{s=1}^n W_{s,\pi(s)}.
\end{equation}
Moreover, $\sgn(\sigma)=\sgn(\pi)\sgn(\pi_0)$ and $\#\mathrm{cyc}(\sigma\circ\pi_0^{-1})=\#\mathrm{cyc}(\pi)$, so the permutation sum in~\eqref{eq:Z-sigma-sum-appE} becomes a fermionant sum.
Thus, the compact fermionant form
\begin{equation}
\label{eq:Z-equals-fermionant-appE}
\mathcal Z_d(\boldsymbol c,\mathbf 0)
= \frac{\sgn(\pi_0)\,\mathrm{Ferm}_d(\mathbf W)}{d^{n+1}}
\end{equation}
holds.
Using the known worst-case hardness of the fermionant~\cite{Mertens2013The}, Eq.~\eqref{eq:Z-equals-fermionant-appE} immediately implies the following corollary for the cyclic-closure observable.

\begin{proposition}[Known worst-case hardness of the fermionant~\cite{Mertens2013The}]
\label{prop:fermionant-hard-appE}
Let $k\ge 2$ be a fixed constant.
Exact evaluation of $\mathrm{Ferm}_k(\mathbf W)$ is $\#\mathrm P$-hard under polynomial-time Turing reductions for any $k>2$,
and is $\oplus\mathrm P$-hard under polynomial-time Turing reductions for $k=2$.
Moreover, both statements hold even when $\mathbf W$ is restricted to be the adjacency matrix of a planar graph.
\end{proposition}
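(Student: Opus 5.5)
Since this proposition is quoted from Ref.~\cite{Mertens2013The}, I would not reprove it from scratch. The plan is to reconstruct the reduction chain in outline, so that its hypotheses (fixed constant $k$, planar adjacency matrices) visibly match how it is used after Eq.~\eqref{eq:Z-equals-fermionant-appE}.

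\textbf{Step 1 (combinatorial reading).} I will first rewrite Definition~\ref{def:fermionant-appE} for a $0/1$ matrix $\mathbf W$ as the adjacency matrix of a directed graph $G$, possibly with loops. In that case $\mathrm{Ferm}_k(\mathbf W)=(-1)^n\sum_{\mathcal C}(-k)^{\#\mathrm{cyc}(\mathcal C)}$, where the sum runs over the cycle covers $\mathcal C$ of $G$. So $\mathrm{Ferm}_k$ is, up to sign, the cycle-cover polynomial of $G$ evaluated at $x=-k$. For undirected (symmetric) $\mathbf W$, each undirected cycle of length at least $3$ appears with both orientations. The count therefore becomes a weighted sum over spanning subgraphs whose components are edges, loops and undirected cycles, with per-component weights that are polynomials in $k$.

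\textbf{Step 2 (link to a known $\#\mathrm P$-hard graph invariant).} The key step is to identify this weighted cycle-cover sum with a point of a Tutte-type polynomial. The intended identity is a loop/medial-graph identity: for a planar graph, cycle covers of a suitably oriented medial or line graph correspond to Eulerian-type states. Summing them with per-cycle weight $-k$ then yields the Potts (chromatic) partition function with $k$ colours. The known dichotomy for the Tutte polynomial restricted to planar graphs (Vertigan; Jaeger–Vertigan–Welsh) then gives:
\begin{itemize}[leftmargin=*,itemsep=1pt,topsep=2pt]
\item $\#\mathrm P$-hardness at every integer point $k>2$ that is not among the exceptional easy points;
\item for $k=2$, after reducing modulo $2$, a quantity that is only $\oplus\mathrm P$-hard.
\end{itemize}
The $k=2$ case reflects the fact that the weight $(-2)^{c}$ collapses in characteristic $2$, which is exactly why the statement weakens there. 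Planarity is preserved by the medial construction, so the reduction can stay within planar adjacency matrices.

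\textbf{Step 3 (reductions and the main obstacle).} I will check that every map used is a polynomial-time Turing reduction. That means building the gadget graph and, where needed, interpolating over polynomially many weight scalings, which amounts to Vandermonde inversion. The main obstacle is Step~2: getting the exact bijection and signs between cycle covers and Potts or colouring states, including the treatment of 2-cycles and loops, while keeping $\mathbf W$ a genuine planar $0/1$ adjacency matrix rather than a weighted one. If the direct bijection fails, my fallback is to allow integer weights and remove them by standard series/parallel (thickening) gadgets with interpolation. For this paper's purposes, it then suffices to combine the cited result with Eq.~\eqref{eq:Z-equals-fermionant-appE}, taking $k=d$.
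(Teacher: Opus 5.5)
The paper gives no proof of this proposition; it cites Mertens and Moore~\cite{Mertens2013The} and then combines the result with Eq.~\eqref{eq:Z-equals-fermionant-appE} at $k=d$. Deferring to the citation therefore matches the paper. The outline you add, however, gets the key identity wrong.

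Let $G$ be a plane graph and $\vec{G}_m$ its alternately oriented medial graph. Cycle covers of the directed line graph $L(\vec{G}_m)$ are exactly the choices of an in$\to$out bijection at every medial vertex. These are the Eulerian circuit partitions of $\vec{G}_m$, and cycles correspond to circuits. The standard oriented-medial (Martin polynomial) identity then gives
\[
\mathrm{Ferm}_k\bigl(A_{L(\vec{G}_m)}\bigr)=-k\,T(G;1-k,1-k).
\]
This is a diagonal Tutte point on $(x-1)(y-1)=k^2$, i.e.\ a $k^2$-state Potts model at $v=-k$, not the $k$-colour chromatic/Potts function you name. Because in- and out-edges alternate around each medial vertex, $L(\vec{G}_m)$ is planar and its adjacency matrix is already $0/1$. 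So no thickening or interpolation is needed. For $k\ge 3$ the point $(1-k,1-k)$ avoids Vertigan's planar-easy set ($H_1$, $H_2$ and finitely many special points), which gives the $\#\mathrm P$ claim. Your $k>2$ conclusion survives, but only after this correction.

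The genuine gap is $k=2$. There the diagonal point is $(-1,-1)$, where $T(G;-1,-1)=(-1)^{|E|}(-2)^{\dim\mathcal B}$, with $\mathcal B$ the bicycle space, is polynomial-time computable for every graph. No Tutte dichotomy yields anything there. Your ``reduce modulo $2$'' step is also empty, because $\mathrm{Ferm}_2$ of a $0/1$ matrix is always even: every cycle cover has at least one cycle.

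The missing idea is to reduce modulo $4$. Cycle covers with at least two cycles contribute multiples of $4$, so
\[
\mathrm{Ferm}_2(\mathbf W)\equiv 2\,h(\mathbf W)\pmod 4,
\]
where $h$ counts directed Hamiltonian cycles. The parity of Hamiltonian cycles is $\oplus\mathrm P$-complete, with planarity to be carried along by a planar parsimonious Hamiltonicity reduction. This needs non-symmetric $\mathbf W$. For symmetric $\mathbf W$ both orientations of every Hamiltonian cycle are counted and the residue modulo $4$ vanishes. So the undirected rewriting in your Step~1 is not the useful form here; one needs directed graphs, or a higher power of $2$ in the modulus.
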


\begin{corollary}[Worst-case hardness of cyclic-closure branch values]
\label{cor:Z-hard-appE}
For any fixed $d\ge 3$, exact computation of the cyclic-closure scalar $\mathcal Z_d(\boldsymbol c,\mathbf 0)$ is
$\#\mathrm P$-hard under polynomial-time Turing reductions, already on the restricted subfamily~\eqref{eq:scalar-blocks-appE}.
For $d=2$ the exact computation is $\oplus\mathrm P$-hard.
\end{corollary}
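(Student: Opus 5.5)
The corollary follows by combining the exact identity~\eqref{eq:Z-equals-fermionant-appE} with Proposition~\ref{prop:fermionant-hard-appE}, via a polynomial-time Turing reduction from evaluating $\mathrm{Ferm}_d$ to evaluating $\mathcal Z_d(\boldsymbol c,\mathbf 0)$.

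Given an instance $\mathbf W\in\M_n(\C)$ of $\mathrm{Ferm}_d$, with $d\ge3$ fixed, we first build the dressed-block data. Undoing the row shift~\eqref{eq:W-rowshift-appE}, we set $a_{t,k}:=W_{\pi_0(t),k}$, so that $W_{t,k}=a_{\pi_0^{-1}(t),k}$. Since $\boldsymbol\beta=\mathbf 0$, we have $\widetilde S_{t,k}=S_{t,k}^{\top}$, and we take $S_{t,k}=a_{t,k}\id_d$.

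Next we realize these blocks inside the protocol. The paper's remark suggests $\mathbf V=U\otimes\id_d$ with $U\in\mathrm U(m)$ whose $n\times n$ submatrix (rows $\ell_1,\dots,\ell_n$, columns $m-n+1,\dots,m$) equals $\lambda A$, where $A=(a_{t,k})$. Choose $\lambda=1/(2\|A\|_{\rm op})$, or any $\lambda>0$ making $\|\lambda A\|\le1$. Unitary dilation then embeds $\lambda A$ into a $2n\times 2n$ unitary, which we pad to $\mathrm U(m)$ with $m\ge 2n$ (in particular $m=\kappa n^2$ for large $n$); the same unitary completion is invoked in the remark after Theorem~\ref{thm:memory-barrier}. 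The record $\boldsymbol c$ that selects those rows is a legitimate collision-free record. Its post-selection probability is irrelevant, because the statement concerns exact values.

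Multilinearity then gives $\mathrm{Ferm}_d(\lambda\mathbf W)=\lambda^n\mathrm{Ferm}_d(\mathbf W)$, since every term contains exactly one entry per row. Hence
\[
\mathrm{Ferm}_d(\mathbf W)=\lambda^{-n}\,d^{n+1}\,\sgn(\pi_0)\,\mathcal Z_d(\boldsymbol c,\mathbf 0),
\]
which costs one oracle call plus polynomial arithmetic. For $d\ge3$, Proposition~\ref{prop:fermionant-hard-appE} then yields $\#\mathrm P$-hardness. For $d=2$, $\mathrm{Ferm}_2$ is $\oplus\mathrm P$-hard, which transfers in the same way.

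The main obstacle is bit complexity and the field. The dilation involves $\sqrt{\id-\lambda^2A^\dagger A}$, which is irrational in general. This can be handled in two ways:
\begin{itemize}
\item The fermionant hardness already holds for planar $0/1$ adjacency matrices, and we only need $\lambda A$ itself to be exactly specified. The remaining entries of $\mathbf V$ never enter $\mathcal Z_d$, because only the selected block submatrix $\mathbf S$ matters by Lemma~\ref{lem:ftpd}. So we may treat the instance as given by $\mathbf S$ together with a promise of unitary completability.
\item If exact discretized unitaries are required, we can instead pick $\lambda$ as a power of $2$ and use a rational (Cayley-transform) completion, or interpret the problem as computing $\mathcal Z_d$ from $\mathbf S$.
\end{itemize}
For $d=2$, the $\oplus\mathrm P$ statement needs values modulo $2$. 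We take $\lambda=1$ by scaling in the reduction, using integer $\mathbf W$, and note that the $d^{n+1}=2^{n+1}$ factor must be removed before reducing mod $2$. This is exactly what recovering the integer $\mathrm{Ferm}_2(\mathbf W)$ from the oracle's exact output achieves.
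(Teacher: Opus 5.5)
Your proposal is correct and follows the paper's route: the corollary is read off from the identity $\mathcal Z_d(\boldsymbol c,\mathbf 0)=\sgn(\pi_0)\mathrm{Ferm}_d(\mathbf W)/d^{n+1}$ combined with Proposition~\ref{prop:fermionant-hard-appE}, and your degree-$n$ rescaling, Halmos-type unitary dilation, and observation that only $\mathbf S$ enters $\mathcal Z_d$ are exactly the content of the paper's unitarity remark. The loose ends are only cosmetic. The ``Cayley-transform'' rational completion is not justified, but you do not need it, and for $d=2$ the point is just recovering the exact integer $\mathrm{Ferm}_2(\mathbf W)$ for any admissible $\lambda$ (taking $\lambda=1$ would break unitary completability for $0/1$ inputs, and $\mathrm{Ferm}_2$ of an integer matrix is always even, so no literal mod-$2$ reduction is involved).
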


\smallskip
\noindent\emph{Remark (unitarity constraint).}
Corollary~\ref{cor:Z-hard-appE} is stated directly in terms of the post-selected scalar matrix $(a_{t,k})$.
If one additionally wishes to realize such instances from a valid FLO propagator, note that the fermionant is
homogeneous of degree $n$: $\mathrm{Ferm}_d(\lambda \mathbf W)=\lambda^n\mathrm{Ferm}_d(\mathbf W)$.
Thus we may rescale $\mathbf W\mapsto \lambda \mathbf W$ so that $\|\lambda \mathbf W\|\le 1$ and embed $\lambda \mathbf W$ as a principal block of a unitary via
a standard unitary completion (e.g., the Halmos unitary dilation of a contraction).
By fixed input/output mode permutations (unitary), one can place the desired $n\times n$ block at the locations selected by
the collision-free record $\boldsymbol c$.
The resulting unitary completion may involve algebraic-number entries; in the standard exact-arithmetic model underlying such
worst-case $\#\mathrm P$-hardness statements, this does not affect the reduction.

\end{document}